\documentclass[11pt]{amsart}
\usepackage{fullpage}
\usepackage[utf8]{inputenc}
\usepackage{graphics}
\usepackage{amsmath, amsfonts,graphicx,cite,comment,amsthm}
\usepackage{subcaption} 

\usepackage{mathtools}
\usepackage{color}

        \definecolor{pink}{rgb}{1,0,1}
	
        \definecolor{purple}{rgb}{0.4,0.2,1}
        
        \newcommand{\mca}{\operatorname{MCA}}


\newtheorem{theorem}{Theorem}
\newtheorem{defn}{Definition}
\newtheorem{prop}{Proposition} 
\newtheorem{cor}{Corollary}

\title[Microbial biodiversity and phenotypic variability]{Biodiversity of marine microbes is safeguarded by phenotypic variability in ecological traits}

\author{S. Menden-Deuer$^{1}$ \\ M. Nursultanov$^{2}$ \\  S. Collins$^{3}$ \\ T. Rynearson$^{1}$\\  J. Rowlett$^{4,5, \ast}$ }

\date{}

\begin{document}

\maketitle

\noindent{} 1. University of Rhode Island;

\noindent{} 2. University of Sydney;

\noindent{} 3. Edinburg University; 

\noindent{} 4. Chalmers University; 

\noindent{} 5. University of Gothenburg.

\noindent{} $\ast$ Corresponding author; e-mail: julie.rowlett@chalmers.se .


\bigskip


\bigskip

\textit{Keywords}: Plankton, game theory, equilibrium strategy, biodiversity, marine microbes. 

\bigskip
\section*{Abstract}
Why, contrary to theoretical predictions, do marine microbe communities harbor tremendous phenotypic heterogeneity?  How can so many marine microbe species competing in the same niche coexist?  We discovered a unifying explanation for both phenomena by investigating a non-cooperative game that interpolates between individual-level competitions and species-level outcomes.  We identified all equilibrium strategies of the game.  These strategies are characterized by maximal phenotypic heterogeneity.  They are also neutral towards each other in the sense that an unlimited number of species can co-exist while competing according to the equilibrium strategies.  Whereas prior theory predicts that natural selection would minimize trait variation around an optimum value, here we obtained a rigorous mathematical proof that species with maximally variable traits are those that endure.  This discrepancy may reflect a disparity between predictions from models developed for larger organisms in contrast to our microbe-centric model.  Rigorous mathematics proves that phenotypic heterogeneity is itself a mechanistic underpinning of microbial diversity.  This discovery has fundamental ramifications for microbial ecology and may represent an adaptive reservoir sheltering biodiversity in changing environmental conditions.

\section*{Introduction}
Marine planktonic microbes make Earth habitable by collectively generating as much organic matter and oxygen as all terrestrial plants combined \cite{falkowski2008}. The ecological, biogeochemical and economical importance of marine microbes is rooted in their vast species and metabolic diversity \cite{falkowski2008, worden2015}.  Recent estimates of marine microbial species diversity exceed 200,000 species in the plankton \cite{devargas2015, sunagawa2015}. At all levels of taxonomy, from species to intra-strain comparisons, there exists a tremendous and theoretically inexplicable reservoir of variability in genetic, physiological, behavioral and morphological characteristics \cite{ahlgrenrocap, ahlgren2006, johnson2006, schaum2012, boyd2013, hutchins2013, kashtan2014, harvey2015, mdmontalbano, sohm2016, godherynearson, wolf2017, olofsson2018}.  The maintenance of such extraordinary diversity and persistent co-existence of planktonic microbes in a putatively isotropic environment represents a long-standing scientific enigma \cite{hutchinson1961} that has yielded a phenomenologically sound explanation \cite{huismanweissing} but to date lacks a general, mechanistic explanation.

While community and intraspecific diversity is usually framed in terms of genotype diversity, genetically-identical cells often have important phenotypic differences in homogeneous environments \cite{fontana2019, heyse2019, mizrachi2019}, a phenomenon called phenotypic heterogeneity \cite{ackermann2015} or intra-genotypic variability \cite{bruijning2020}. Although phenotypic heterogeneity is a key attribute of microbes, it is not typically examined as a potential driver of microbial diversity or species persistence. Importantly, phenotypic heterogeneity can be acted on by natural selection; it is heritable, can be altered by genetic change, and can influence survival and fitness \cite{raj2008, vineyreece, fontana2019}. Previous treatments of phenotypic heterogeneity focus on how phenotypic heterogeneity can be generated \cite{ackermann2015, calabrese2019}, how this characteristic can allow populations to use bet-hedging to persist in heterogenous environments \cite{ackermann2015}, to support niche partitioning \cite{huismanweissing}, or to facilitate intra-specific cooperation \cite{westcooper}. However, the role of phenotypic heterogeneity in maintaining species diversity remains unexplored. We propose that phenotypic heterogeneity and incorporation of cell-cell interactions is key to understanding coexistence in microbial communities and explains how large numbers of species can coexist, even in unstructured environments.

Here we examined how phenotypic heterogeneity affects population survival and coexistence. This corresponds to a scenario where all competing species are reasonably well-adapted to their environment. We leveraged game theory \cite{cowden2012} as a tool to explore the consequences of phenotypic heterogeneity for the outcomes of cell-cell competitions at the individual level and the resulting implications for the persistence of populations.  A non-cooperative game for competition between microbe species was introduced in \cite{mdr2014} and \cite{mdr2018}.  In that model, a 
player in the game-theoretic-sense  represents a species that is comprised of many individual microbes. This provides a crucial link between individual-level interactions and species-level repercussions. Importantly, the approach taken specifically formulated a model to reflect the characteristics of asexually reproducing microbes, and thus departs from approaches that utilize models designed for multicellular organisms to understand microbial ecology and evolution.  In \cite{mdr2018}, competitions between pairs of species were analyzed, and it was shown that there are no evolutionary stable strategies.  That result indicated that optimal trait distributions for microbe species might not be centered around an optimal value, but instead, that perhaps microbe species may be characterized by more variable trait distributions.  However, the model in \cite{mdr2018} did not analyze competition between arbitrary and fluctuating numbers of competing species, and in particular, did not identify the best strategies to promote survival and co-existence.  Here, we generalize the model in \cite{mdr2018} to create a non-cooperative game in which arbitrary and fluctuating numbers of microbe species compete.  We identify all equilibrium strategies.  These equilibrium strategies have two key characteristics: maximal phenotypic heterogeneity and neutrality towards all other strategies that are equally cumulatively fit.  


\section*{Methods:  A microbe centric competition model that links individual-level interactions to population-level consequences.}

Taking a microbe-centric point of view, we assume that species are represented by many individual, closely-related cells.   These individuals compete for limited resources.  The outcome of competition between individuals, no matter how biologically complex, can be reduced to three possibilities:  win, lose, or draw.  Since microorganisms reproduce asexually, success or failure in competition corresponds directly to population increase or decrease, respectively, as shown in Fig \ref{fig:1}.  

\begin{figure}[h] \centering \includegraphics[width=0.9\textwidth]{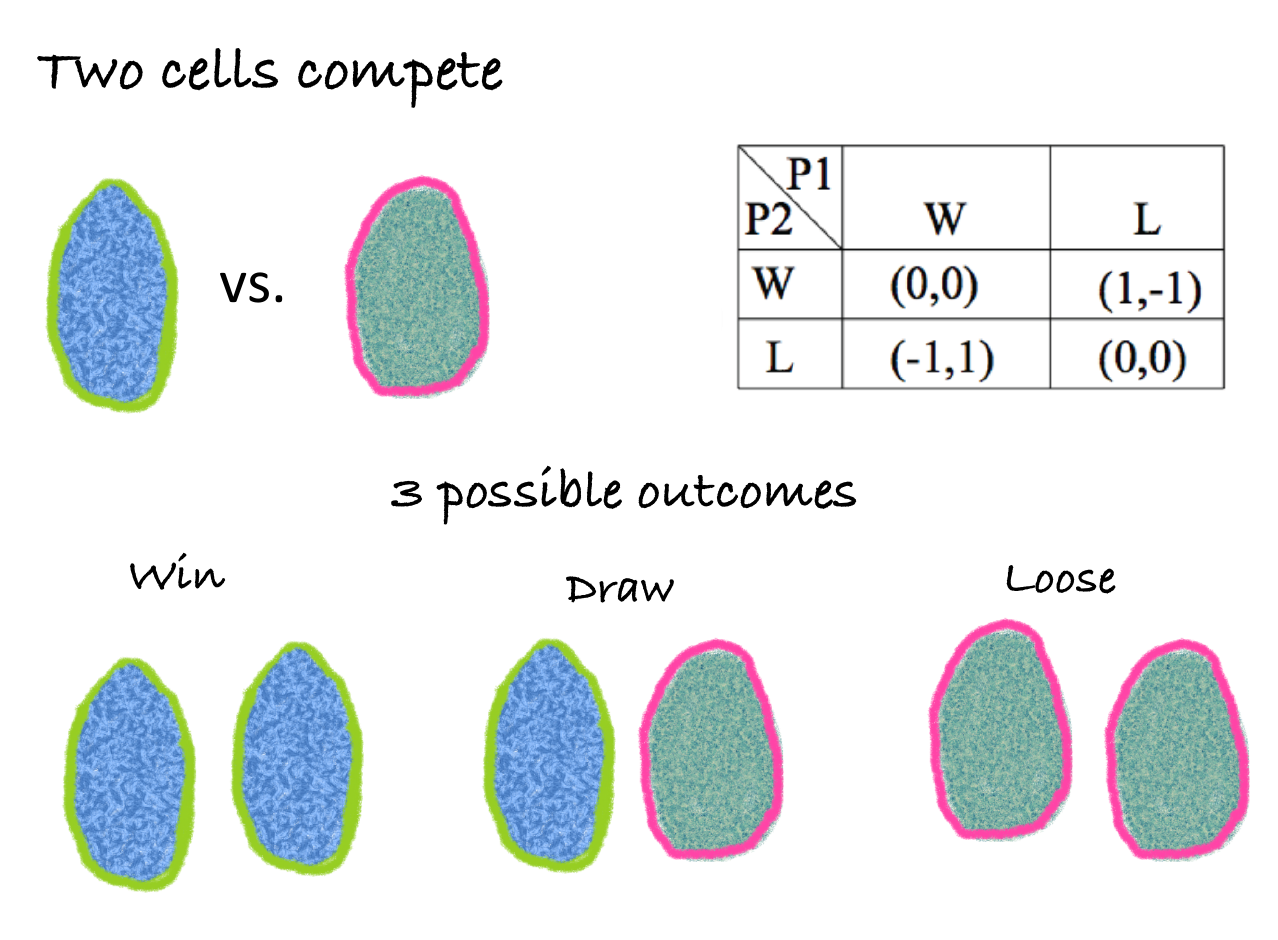}
\caption{{\bf Individuals compete for limited resources.} When two asexually reproducing individuals compete in a zero-sum game, corresponding to limited resources, there are three possible outcomes for each individual:  win=self-replicate; draw=maintain status quo for each competitor; lose=die/competitor replicates.  The table shows the pay-off function of individual-level competition.  A payoff of 1 corresponds to win, a payoff of -1 corresponds to lose, and a payoff of 0 corresponds to a draw.}
\label{fig:1}
\end{figure} 

At each round of competition, each individual is assigned a competitive ability according to the strategy of its species. Biologically, a strategy represents the probability distribution of competitive abilities (e.g. traits) for all individuals of the species.  Different trait distributions can be the result of random phenotypic noise, plasticity, or the genetic variation that inevitably exists in large, actively-dividing microbial populations \cite{rocha2018}.  Leveraging theoretical mathematics, we are able to simultaneously consider all traits and all types of distributions, rather than single traits with close to normal distributions as done previously \cite{barbarasdandrea}.  In the game theoretic sense, a trait distribution is a mixed strategy for the species, considered as a player in a non-cooperative game. Representing competitive ability as a distribution, rather than a mean value exemplifies our approach of incorporating phenotypic heterogeneity in the assessment of competition outcomes. 

\begin{figure} \centering \includegraphics[width=\textwidth]{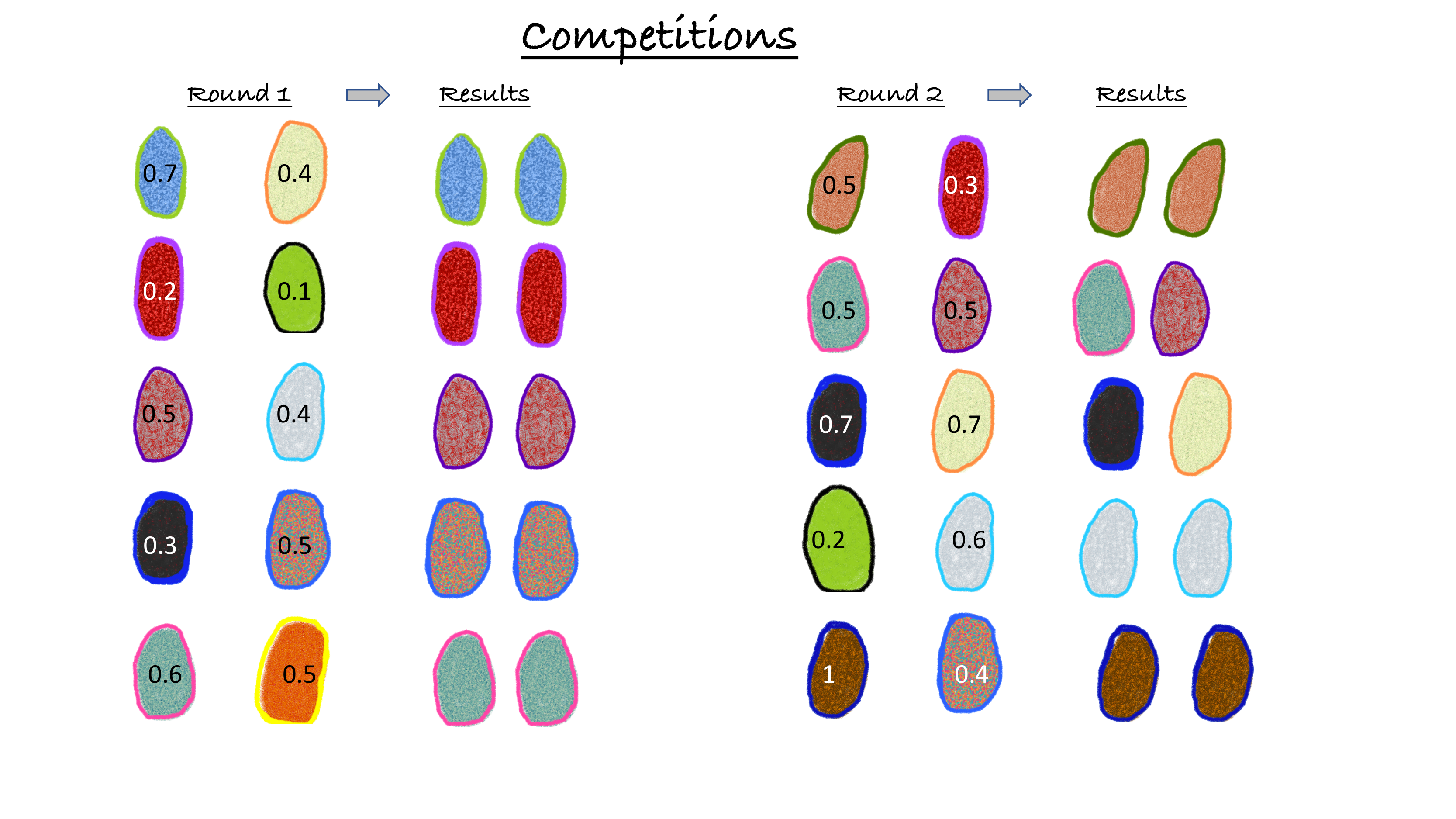}
\caption{{\bf Species comprised of individuals compete for limited resources.}  At each round of competition, diverse individuals compete and are assessed based on their relative competitive ability (values within cells). Different colors can represent different individuals in a cohort, which can represent a clone or species. Following empirical observations, the competitive ability of an individual is not necessarily constant from one round of competition to the next.  Moreover, if an individual wins a competition, then its descendant does not necessarily inherit the same competitive ability.  Rather, at each round of competition the competitive abilities of the individuals are assigned according to the strategy of the species. Thus, variability amongst individuals and the strategy of the species is preserved.}
\label{fig:2}
\end{figure} 

Game theory is uniquely suited to examine the effects of phenotypic heterogeneity on survival of competing microbial species because game theory evaluates population-level outcomes of individual-level interactions as depicted in Fig \ref{fig:2}. In our application of game theory, we define species-level strategies from which we derive the outcomes of individual-level competitions that ultimately determine the population-level survival and abundance. This approach is in contrast with population-level assessments, like Lotka-Volterra type models, where phenotypic heterogeneity is accounted for by using stochastic differential equations \cite{berryman1992}.  The game theoretic framework provides a quantitative means to evaluate the success of species competing according to their competitive strategies.  Below, we obtain tractable equations for each individual microbe, allowing us to mathematically represent individuals yet analyze all individuals in a population collectively. The resulting set of equations explicitly interpolates between microscopic or individual-interactions and macroscopic, or population-level consequences.  A key realization is that there are many unique strategies that all have an identical mean competitive ability. This characteristic builds a fundament of why many species with phenotypic heterogeneity can co-exist: because the competing individuals can yet differ.

\subsection*{The discrete and continuous models} 
We consider $n$ competing species and allow $n$ to vary.  Each species has a strategy, and we may use the same notation to denote both the species as well as its strategy.  Here we shall analyze two models:  discrete and continuous.  In the discrete model, there is a finite set of competitive abilities: 

\[ \left\{ \frac j M \right\}_{j=0} ^M,  \quad x_j := \frac j M.\] 
A strategy in the discrete model is a map 

\[ A : \left\{  x_j \right\}_{j=0} ^M \to [0, \infty)^{M+1}.\] 
We use $A$ to denote both the species and its strategy.  To reflect biological constraints, we set a constraint on the \em mean competitive ability (MCA) \em for all species.  

\begin{defn} \label{def:strategy_d} A \em strategy \em in the discrete case is a map $A : \left\{ x_j \right\}_{j=0} ^M \to [0, \infty)^{M+1}$ such that 
\begin{equation} \label{eq:mca_constraint_discrete} |A| := \sum_{j=0} ^M A(x_j) > 0 \textrm{ and }  \mca (A) := \frac{1}{|A|} \sum_{j=0} ^M x_j A (x_j) \leq \frac 1 2. \end{equation} 
\end{defn} 

Consequently, biologically we interpret $|A|$ as the population of $A$.  A species whose population is zero cannot compete; thus we only analyze strategies with a positive population, but we note that the population need not be integer-valued.  The strategy assigns competitive abilities to individuals in the sense that $ \frac{A(x_j)}{|A|}$ is the probability that a randomly selected individual from species $A$ has competitive ability equal to $x_j$.  In game theory, it may seem more natural to define the strategy instead as a map
\[ \widetilde A : \left\{ x_j \right\}_{j=0} ^M \to [0, 1]^{M+1} \textrm{ and require that } \sum_{j=0} ^M \widetilde A(x_j) = 1.\] 
This would then imply that all species have population equal to one, but we wish to allow species of different populations to compete, and therefore we do not make this normalization.  However, it is wholly equivalent to consider $\widetilde A := A/|A|$ as the strategy of the species with the population equal to $|A|$, requiring that $|A|>0$ but need not equal one, and with this normalization indeed 
\[ \widetilde A := \frac{A}{|A|} : \left\{ x_j \right\}_{j=0} ^M \to [0, 1]^{M+1}, \quad \sum_{j=0} ^M \widetilde A(x_j) = 1.\] 

One could imagine that the number of competitive abilities might change or possibly tend to infinity, and for this reason our analyses include a second model, \em the continuous model. \em   In the continuous model, the competitive abilities are selected from the entire range of real numbers between $0$ and $1$.  A species is associated with a continuous non-negative function defined on the unit interval, that is used to define the strategy of the species.  

\begin{defn} \label{def:strategy_c} A \em strategy \em in the continuous model is a map $f: [0,1] \to [0, \infty)$ that satisfies
\begin{equation} \label{eq:mca_constraint_cont} 
F(1) := \int_0 ^1 f(x) dx > 0 \textrm{ and }  \mca(f) := \frac{1}{F(1)} \int_0 ^1 x f(x) dx \leq \frac 1 2.  \end{equation} 
\end{defn} 
Similar to the discrete model, a species whose population is zero cannot compete, so we only analyze those species that have positive, but not necessarily integer-valued, populations.  The continuous model is obtained from the discrete model by letting the number of competitive abilities $M \to \infty$.  In game theory, it may seem more natural to define the strategy instead as the function
\[ \widetilde f(x) := \frac{f(x)}{F(1)} \implies \widetilde f(x) \geq 0 \forall x, \quad \int_0 ^1 \widetilde f(x) dx = 1.\] 
The function $\widetilde f$ is then the probability distribution of competitive abilities of the species, but with this normalization, again 
all species would have populations equal to one, and we wish to allow species of different populations to compete.  Consequently, we do not make this normalization, but it is mathematically equivalent to consider $\widetilde f$ as the strategy of the species with the population to $F(1)$, requiring only that $F(1)>0$ but need not equal one.

The change in population of the species is determined by the game theoretic payoffs.  These payoffs are defined by the expected value in competition when all species simultaneously compete.  Consequently, in the discrete case, we define the payoff to species $A_k$ to be

\begin{equation} \label{eq:def_payoff_d} \wp (A_k; A_1, \ldots, A_{k-1}, A_{k+1}, A_n) := \frac{1}{\sum_{\ell=1} ^n |A_\ell|} \sum_{j=0} ^M A_k (x_j) \left [ \sum_{i<j} \sum_{\ell=1} ^n A_\ell (x_i) - \sum_{i>j} \sum_{\ell=1} ^n A_\ell (x_i) \right ]. \end{equation} 
Above and throughout this work, an empty sum is taken to be equal to zero.   Here, species are competing both internally and externally.  Due to the zero sum dynamic, however, internal competition within a species does not affect its population.  Consequently, these payoffs 
show how the species' populations increase, decrease, or remain stable when the species compete according to their strategies.  
The payoffs in the continuous case are defined in an analogous way, so that the payoff to species $f_k$ is 

\begin{equation} \label{eq:def_payoff_c} \wp (f_k; f_1, \ldots, f_{k-1}, f_{k+1}, \ldots, f_n) = \frac{1}{\sum_{\ell=1} ^n F_\ell (1)} \int_0 ^1 f_k (x) \left[ \int_0 ^x\sum_{\ell=1} ^n f_\ell (t) - \int_x ^1 \sum_{\ell=1} ^n f_\ell (t) \right] dx. \end{equation}

Having obtained this unique set of equations for all individual microbes connecting individual-level interactions to population-level consequences, we are poised to mathematically analyze all strategies.  What are the best strategies?  A fundamental concept in non-cooperative game theory is an equilibrium strategy:  no one player can increase their payoff if only that one player changes their strategy whilst other players keep their strategies fixed \cite{nash1951}.

\begin{defn} \label{def:eq_str} In the discrete case, an equilibrium strategy is a set of strategies $A_1, \ldots, A_n$ as in Definition \ref{def:strategy_d} such that for all $k=1, \ldots, n$ we have 
\[ \wp (A_k; A_1, \ldots, A_{k-1}, A_{k+1}, \ldots A_n) \geq \wp (B; A_1, \ldots, A_{k-1}, A_{k+1}, \ldots, A_n),\] 
for all strategies $B$ as in Definition \ref{def:strategy_d}.  In the continuous case, an equilibrium strategy is a set of strategies $f_1, \ldots f_n$ as in Definition \ref{def:strategy_c} such that for all $k=1, \ldots, n$ we have 
\[ \wp (f_k; f_1 \ldots, f_{k-1}, f_{k+1}, \ldots f_n) \geq \wp (g; f_1 \ldots, f_{k-1}, f_{k+1}, \ldots, f_n),\] 
for all strategies $g$ as in Definition \ref{def:strategy_c}. 
\end{defn}


\subsection*{Simulations for the discrete model} \label{s:simulations} 
\begin{figure}[h]
\centering
   \begin{subfigure}{0.32\linewidth} \centering
     \includegraphics[width=\textwidth]{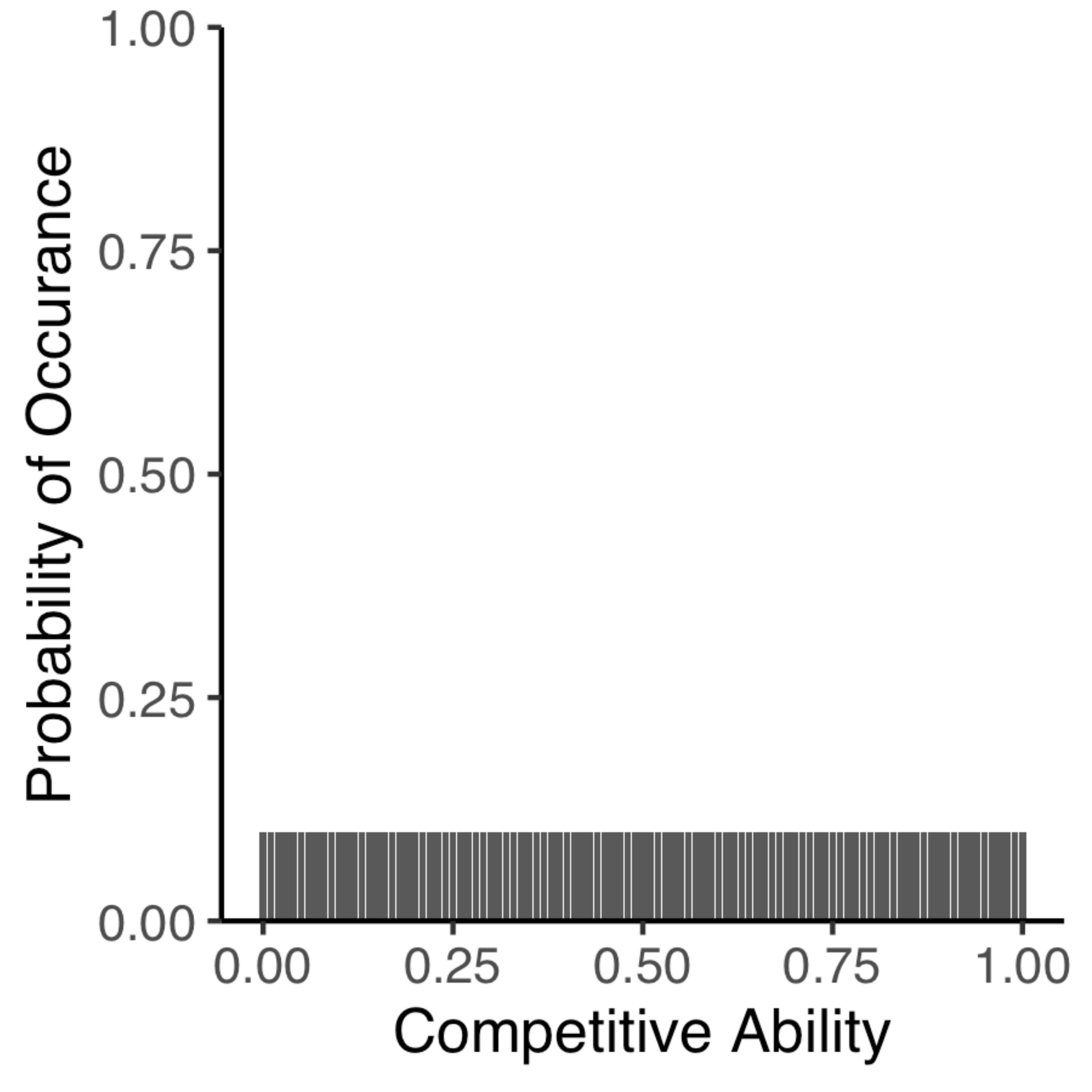}
   \end{subfigure}
   \begin{subfigure}{0.32\linewidth} \centering
     \includegraphics[width=\textwidth]{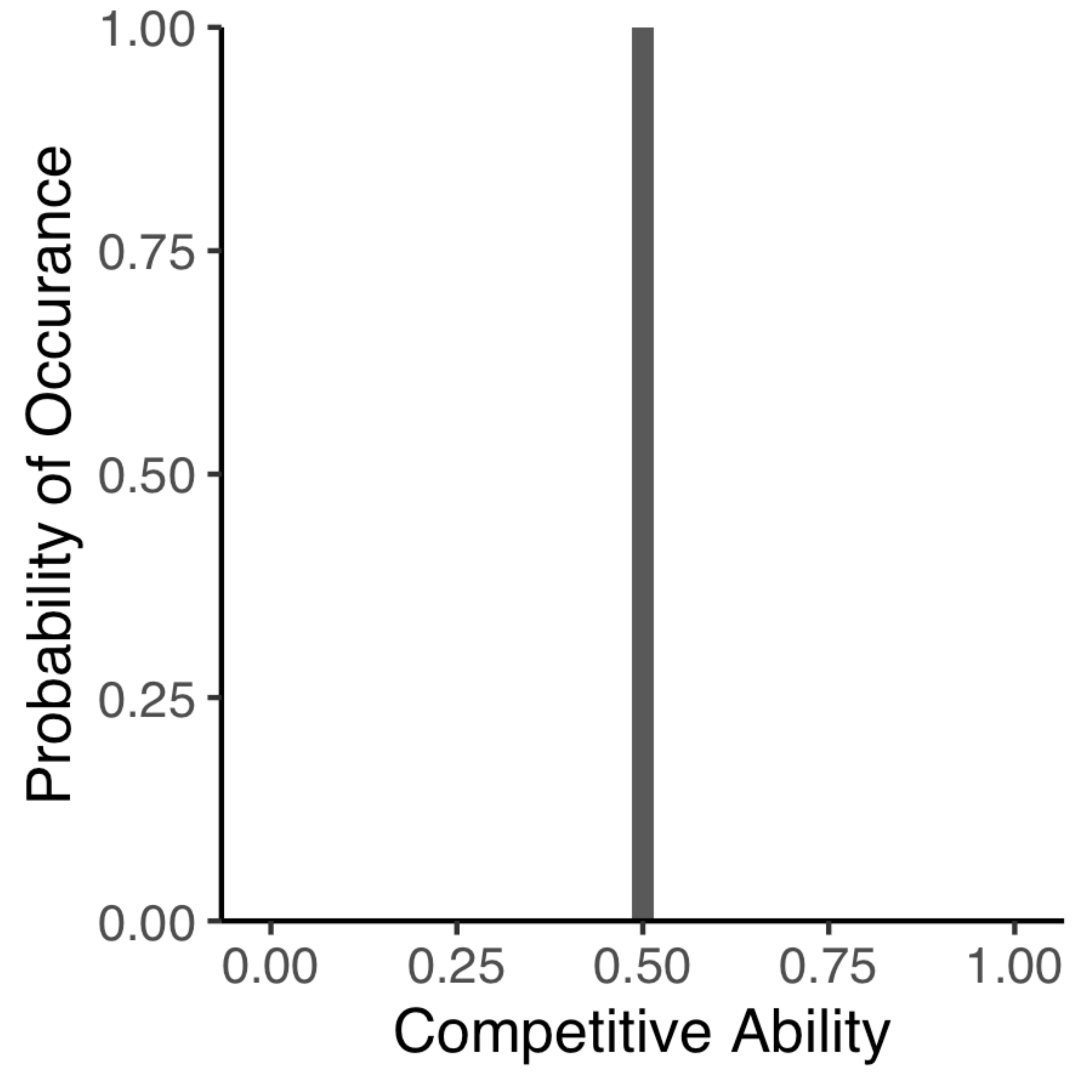}
   \end{subfigure}
     \begin{subfigure}{0.32\linewidth} \centering
     \includegraphics[width=\textwidth]{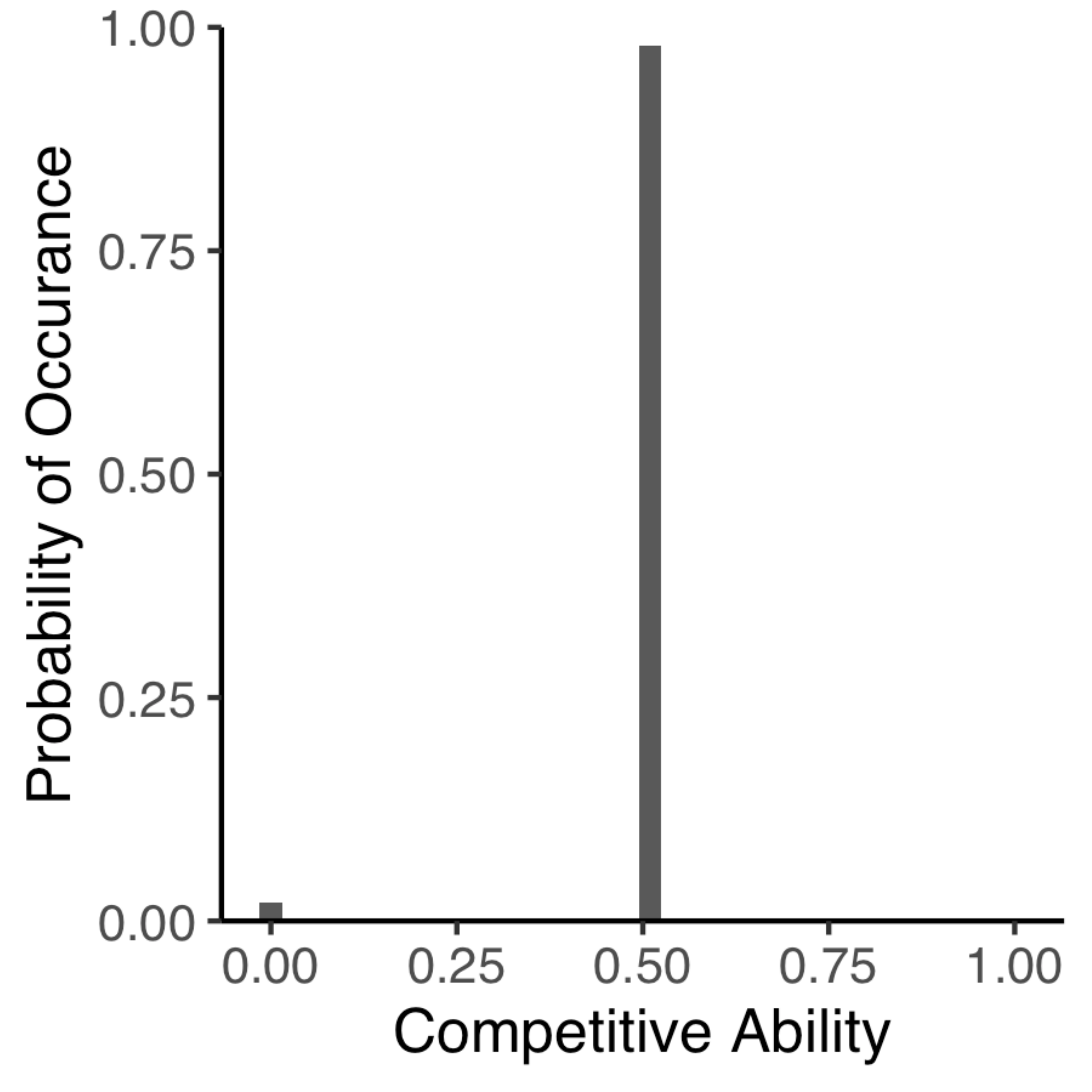}
     \end{subfigure}    
\caption{A uniform distribution has all values between zero and 1 equally likely. An invariant or degenerate distribution is the opposite; all competitive abilities are equal to the mean.  Lake Wobegon is characterized by one zero (loser) and all others have a mean slightly above 0.5, depending on population size.} 
\label{fig:s1}
\end{figure}

To visualize the theoretical mathematics, we simulate competitions for the discrete model, following the model described in \cite{mdr2014}.   Competition outcomes are based on a value for the ‘competitive ability (CA)’ which is a number between 0 and 1, with zero indicating poor competitive ability and 1 indicating an unbeatable winner. The competitive ability reflects fitness with respect to an ecologically meaningful trait of the individual, such as physiology, predator defence, or morphology. The key is that the CA can be distributed in infinite ways and yet yield the same mean competitive ability. Some of the distributions we highlight here are illustrated in Fig \ref{fig:s1}.

\section*{Results and Discussion}
In the following we provide Theorem 1, our main result, that presents all equilibrium strategies for any number of competing species.  We explain how the mathematical proof is obtained by building on the case of two competing species and outline the key results we prove in that special case, but it is important to keep in mind that Theorem 1 applies to \em any number \em of competing species.  We then document the biological meaning through simulations that visualize these results for a subset of cases; the mathematical theorem holds for infinitely many cases so it is of course only possible to simulate and visualize a subset of these.  

\begin{theorem} \label{thm:1} The equilibrium strategies in the discrete case are comprised of precisely those strategies $A$ as in Definition \ref{def:strategy_d} that satisfy $\wp(A; B) \geq  0$, 
for all strategies $B$ as in Definition \ref{def:strategy_d}.  Moreover, $\mca(A) = \frac 1 2$, and $\wp(A; B)=0$ if and only if $\mca(B) = \frac 1 2$.  
In the discrete case, when the set of competitive abilities is $ \left\{ x_j = \frac j M \right\}_{j=0} ^M$, 
and $M$ is odd, then $A(x_j)$ is positive and identical for all $j$.  In case $M$ is even, then $\mca(A) = \frac 1 2$ and $A(x_{2j}) = A(x_0)$, and $A(x_{2j+1}) = A(x_1)$ for $j=0, \ldots, \frac M 2$.  The equilibrium strategies in the continuous case are comprised of non-negative, continuous functions $f$ that satisfy $\wp(f; g) \geq 0$, 
for all strategies $g$ as in Definition \ref{def:strategy_c}.  Moreover, $\mca(f) = \frac 1 2$, and $\wp(f;g)=0$ if and only if $\mca(g) = \frac 1 2$.  In the continuous case, the functions that satisfy this condition are all positive constant functions.  
\end{theorem}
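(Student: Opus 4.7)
The plan is to reduce the $n$-player game to an effective two-player symmetric zero-sum game and then combine minimax with linear-programming duality to pin down the structure of $A$. As a first move I observe that when player $k$ faces opponents $A_1,\ldots,A_{k-1},A_{k+1},\ldots,A_n$, the payoff $\wp$ depends only on the summed strategy $B:=\sum_{\ell\ne k}A_\ell$, which is itself a valid strategy because $\mca(B)$ is a weighted average of the $\mca(A_\ell)\le 1/2$. A short calculation using the antisymmetry of the kernel $\mathbf{1}_{i<j}-\mathbf{1}_{i>j}$ yields $\wp(A;A)=0$ and $\wp(A;B)+\wp(B;A)=0$, so the effective game is symmetric zero-sum with value $0$. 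The minimax theorem then identifies the equilibrium strategies as exactly those $A$ with $\wp(A;B)\ge 0$ for every valid $B$: if every $A_k$ satisfies this condition then the collection is an equilibrium, because $\wp(g;B_k)=-\wp(B_k;g)\le 0=\wp(A_k;B_k)$ and the sum of such strategies is again such a strategy; conversely, any strategy appearing in an equilibrium must satisfy the inequality by the usual saddle-point argument.

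Next, I use $\wp(A;A)=0$ to cancel the internal sum and rewrite the condition as $\sum_j B(x_j)P_A(j)\le 0$ for every valid $B$, where $P_A(j):=\sum_{i<j}A(x_i)-\sum_{i>j}A(x_i)$ is non-decreasing in $j$. Testing against the uniform $B$ and using the identity $\sum_j P_A(j)=M|A|(1-2\,\mca(A))$ forces $\mca(A)\ge 1/2$; combined with $\mca(A)\le 1/2$ from validity this yields $\mca(A)=1/2$ and $\sum_j P_A(j)=0$. I then view the inequality over all valid probability distributions $B$ (with mean $\le 1/2$) as a linear program, whose dual demands a line $y_1+y_2 x$ with slope $y_2\ge 0$ dominating every $P_A(j)$ and satisfying $y_1+y_2/2\le 0$. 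Since the uniform $B$ is a primal optimizer with full support and achieves the value $0$, complementary slackness forces the dual line to be tight at every $x_j$, so $P_A(j)=y_2(x_j-\tfrac{1}{2})$ is linear in $j$. The elementary identity $P_A(j)-P_A(j-1)=A(x_{j-1})+A(x_j)$ then becomes the two-term recurrence $A(x_{j-1})+A(x_j)=y_2/M$ whose solutions are precisely the alternating sequences $A(x_{2k})=A(x_0)$ and $A(x_{2k+1})=A(x_1)$; substituting back into $\mca(A)=1/2$ collapses this to $A(x_0)=A(x_1)$ (the uniform distribution) when $M$ is odd, and leaves the entire alternating family when $M$ is even, where the mean constraint turns out to be automatic. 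The continuous case runs along the same lines with $\Phi(x):=F(x)/F(1)$: testing two-point probability measures of mean $1/2$ at $(0,b)$ and at $(a,1)$ yields $\Phi(x)\le x$ on all of $[0,1]$, and then the validity bound $\int_0^1\Phi\ge 1/2$ closes against $\int_0^1 x\,dx=1/2$ to force $\Phi(x)=x$ a.e., so that $f$ is a positive constant by continuity. The remaining identity $\wp(A;B)=0\iff\mca(B)=1/2$ follows by substituting the explicit $A$ into the simplified payoff, which factors as a positive multiple of $1-2\,\mca(B)$.

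I expect the main obstacle to be the rigidity step that upgrades the pointwise consequences of the equilibrium inequality to the global uniqueness of the form. In the discrete case this rests on recognising that the uniform $B$ is both a valid strategy and a full-support primal optimizer, so complementary slackness leaves no slack in the dual line and all $P_A(j)$ must lie on a single affine function through $(\tfrac{1}{2},0)$. In the continuous case the analogous sharpness comes from the coincidence $\int_0^1 x\,dx=1/2$, which lets the validity bound $\int_0^1\Phi\ge 1/2$ meet the equilibrium bound $\Phi(x)\le x$ exactly, forcing equality almost everywhere. Once this rigidity is in hand, the explicit forms of the equilibrium strategies and the remaining assertions of the theorem follow by direct computation.
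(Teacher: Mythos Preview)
Your proposal is correct and takes a genuinely different, more structural route than the paper. The paper proceeds constructively: for every non-equilibrium $A$ it builds, through several case distinctions (asymmetric versus symmetric, $M$ odd versus even, and in the continuous case via explicit piecewise-linear tent functions supported near $0$ or near $1$), an explicit invader $B$ with $\wp(B;A)>0$. You instead recast the two-player condition as a linear program and let complementary slackness against the full-support uniform optimizer force $P_A(j)$ onto an affine line through $(\tfrac12,0)$; the two-term recurrence $A(x_{j-1})+A(x_j)=y_2/M$ then reads off the alternating/uniform structure in one stroke. In the continuous case your integral squeeze ($\Phi(x)\le x$ pointwise together with $\int_0^1\Phi\ge \tfrac12=\int_0^1 x\,dx$) replaces the paper's lengthy tent-function estimates. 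Your route is cleaner and casework-free; the paper's route yields explicit competing species, which is what it wants biologically.

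Two points deserve tightening. First, in the continuous model your ``two-point probability measures'' are not admissible strategies under Definition~\ref{def:strategy_c}, which requires continuous densities. The fix is routine---replace each atom by a narrow tent of the same mass and pass to the limit using continuity of $\Phi$---but you should say so, since these are exactly the test functions the paper constructs by hand. Second, the passage from an $n$-player equilibrium to the two-player inequality for each individual $A_k$ is not quite ``the usual saddle-point argument'': the equilibrium condition directly yields only that each aggregate $B_k:=\sum_{\ell\ne k}A_\ell$ satisfies $\wp(B_k;g)\ge 0$ for all $g$, not that $A_k$ does. One then recovers $A_k$ from the $B_\ell$ via the linear identity $(n-1)A_k=\sum_\ell B_\ell-(n-1)B_k$ and uses that the map $A\mapsto P_A$ is linear, so $P_{A_k}$ is again affine through $(\tfrac12,0)$ (respectively $f_k$ is again constant), with nonnegativity of $A_k$ supplied by its being a valid strategy. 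The paper carries out the equivalent subtraction explicitly; in your framework it is a one-line observation once stated.
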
 

The following corollary to our main theorem shows that combining species characterized by equilibrium strategies, the resulting combined species are still equilibrium strategies.  
\begin{cor} \label{cor:adding} The sum of equilibrium strategies is an equilibrium strategy in the following sense:  assume that $(A_1, \ldots, A_n)$ is an equilibrium strategy and that $(B_1, \ldots, B_m)$ is an equilibrium strategy, for some $1 \leq m \leq n$.  Then $(A_1 + B_1, \ldots, A_m + B_m, A_{m+1}, \ldots A_n)$ is an equilibrium strategy.  This is true in both the discrete and continuous models. 
\end{cor}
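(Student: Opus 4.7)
The plan is to reduce the corollary to an application of Theorem 1, which characterizes equilibrium strategies individually: a valid strategy $A$ is an equilibrium strategy precisely when $\mca(A) = \tfrac{1}{2}$ and $\wp(A; B) \geq 0$ for every valid strategy $B$. It therefore suffices to show that each component $C_k$ of the combined tuple $(C_1,\ldots,C_n) := (A_1+B_1,\ldots,A_m+B_m,A_{m+1},\ldots,A_n)$ satisfies these two conditions. For $k>m$, nothing needs to be checked since $C_k = A_k$ is already an equilibrium strategy by assumption, so I focus on $k\leq m$ with $C_k = A_k + B_k$.

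First I verify that $C_k$ is admissible in the sense of Definition \ref{def:strategy_d}: non-negativity is preserved under addition, the population is $|C_k| = |A_k|+|B_k| > 0$, and since Theorem 1 guarantees $\mca(A_k) = \mca(B_k) = \tfrac{1}{2}$, the weighted-average identity gives
\[\mca(A_k+B_k) \;=\; \frac{|A_k|\,\mca(A_k) + |B_k|\,\mca(B_k)}{|A_k|+|B_k|} \;=\; \frac{1}{2}.\]

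The crux is to establish $\wp(A_k+B_k; D) \geq 0$ for every valid strategy $D$. The key observation is that, before normalization, the payoff in \eqref{eq:def_payoff_d} is linear in its first argument: grouping the terms $(A_k+B_k)(x_j) = A_k(x_j) + B_k(x_j)$ in the outer sum and clearing the common denominator yields the algebraic identity
\[(|A_k|+|B_k|+|D|)\,\wp(A_k+B_k;D) \;=\; (|A_k|+|D|)\,\wp(A_k;D) \;+\; (|B_k|+|D|)\,\wp(B_k;D).\]
Since $A_k$ and $B_k$ are each equilibrium strategies, Theorem 1 gives $\wp(A_k;D) \geq 0$ and $\wp(B_k;D) \geq 0$, so the right-hand side is non-negative and therefore $\wp(C_k;D)\geq 0$. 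The continuous case proceeds by the same argument, with the analogous identity derived from \eqref{eq:def_payoff_c} by replacing sums with integrals.

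The only real obstacle is careful bookkeeping of the normalization constants when deriving the additivity identity for the payoff; no new ingredients beyond Theorem 1, the definitions of $\wp$, and the MCA weighted-average formula are required.
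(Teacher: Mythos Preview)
Your proposal is correct and follows essentially the same approach as the paper: both reduce the corollary to Theorem~1 by observing that the characterizing features of an equilibrium strategy are preserved under addition. The paper's one-line proof leaves this implicit (and could equally be read as invoking the explicit structural description---uniform strategies, or positive constants---which is trivially closed under sums), whereas you spell it out via the payoff-inequality characterization and the additivity identity for $\wp$; both routes are valid and rest on the same idea.
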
 

Since the proof is quite short, we include it here.  
\begin{proof}
We note that the sum of any two strategies contained in an equilibrium strategy has precisely the same features necessary and sufficient to be a strategy comprising an equilibrium strategy. 
\end{proof} 

The techniques we use to prove Theorem 1 are quite different in the discrete and continuous cases, and so we distinguish these cases in \S \ref{si:discrete} and \S \ref{si:continuous}.  In both cases, however, we use the same method of building from the special case of two competing species, and use this case to establish our theorem for arbitrary and possibly fluctuating numbers of competing species.  This technique bears some resemblance to a proof by induction, viewing the `base case' as the case of two competing species.

\subsection*{Results obtained for the discrete model for the special case of two competing species}
A uniform strategy is impervious to invasion by any other strategy.  
\begin{prop} \label{prop0}  A \em uniform \em strategy is defined for $|U|>0$ to satisfy $U(x_j) = \frac{|U|}{M+1}$ for all $j$.  Then for any strategy $A$, with $|A|>0$ and $\mca(A) \leq \frac 1 2$, we have 
\[ \wp(A; U) \leq 0, \quad \wp(U; A) \geq 0,\] 
with equality if and only if $\mca(A) = \frac 1 2$.
\end{prop}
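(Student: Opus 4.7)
The plan is to exploit the zero-sum structure of the payoff to eliminate all self-interaction terms, then plug in the explicit uniform density and recognize the remaining sum as a rescaled version of $\mca(A)$.

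First, I would observe the following cancellation for any single strategy $C$: when we compute the contribution of $C$ against itself inside the bracket of \eqref{eq:def_payoff_d}, we get
\[
\sum_{j=0}^M C(x_j) \left[ \sum_{i<j} C(x_i) - \sum_{i>j} C(x_i) \right] = \sum_{i<j} C(x_i) C(x_j) - \sum_{j<i} C(x_i) C(x_j) = 0,
\]
after relabeling. Applying this both to $A$ in $\wp(A;U)$ and to $U$ in $\wp(U;A)$, the internal terms drop out and we are left with
\[
\wp(A;U) = \frac{1}{|A|+|U|}\sum_{j=0}^M A(x_j)\left[\sum_{i<j} U(x_i) - \sum_{i>j} U(x_i)\right],
\]
with an analogous formula for $\wp(U;A)$ (with the roles of $A$ and $U$ swapped inside the bracket).

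Next I would substitute $U(x_i) = |U|/(M+1)$. Since $\#\{i : i<j\} = j$ and $\#\{i : i>j\} = M - j$, the bracket becomes $\tfrac{|U|}{M+1}(2j-M)$. Therefore
\[
\wp(A;U) = \frac{|U|}{(|A|+|U|)(M+1)}\sum_{j=0}^M A(x_j)(2j-M).
\]
Recognising $2j = 2M x_j$ and using the MCA constraint \eqref{eq:mca_constraint_discrete}, the remaining sum equals $M|A|\bigl(2\,\mca(A)-1\bigr)$, which is $\le 0$ because $\mca(A)\le 1/2$, with equality precisely when $\mca(A)=1/2$. This gives $\wp(A;U)\le 0$ with the stated equality condition.

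For $\wp(U;A)$, the same formula manipulation with the roles swapped leads to $\sum_j \bigl[\sum_{i<j} A(x_i) - \sum_{i>j} A(x_i)\bigr]$, which by switching the order of summation equals $\sum_i A(x_i)(M - 2i) = M|A|(1-2\,\mca(A))\ge 0$. Thus $\wp(U;A) = -\wp(A;U)\ge 0$, again with equality iff $\mca(A)=1/2$. I do not expect any genuine obstacle here: the only care needed is the bookkeeping in the double-sum swap and keeping track of the factor $|U|/((|A|+|U|)(M+1))$; the crucial conceptual step is the cancellation of the self-interaction terms, which reduces everything to a linear functional of $A$ evaluated against the antisymmetric weight $2j-M$.
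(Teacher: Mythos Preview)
Your proposal is correct and follows essentially the same route as the paper: substitute the uniform density, collapse the bracket to $\tfrac{|U|}{M+1}(2j-M)$, and recognize the resulting sum as $M|A|(2\,\mca(A)-1)$. The only cosmetic difference is that you spell out the cancellation of the self-interaction terms explicitly, whereas the paper simply starts from the reduced payoff expression (having noted earlier in the text that internal competition does not affect the payoff); your direct count $\#\{i<j\}=j$, $\#\{i>j\}=M-j$ is the same computation the paper does via the complement identity $\sum_{i>j}U(x_i)=|U|-U(x_j)-\sum_{i<j}U(x_i)$.
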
 

The following result is not necessarily of independent interest, but it is crucial to proving that asymmetric strategies are vulnerable to invasion.
\begin{prop} \label{prop1} Assume that a strategy $A$ in the discrete case has $\mca(A) = 0.5$ and is not symmetric about $0.5$, then there exists at least one $\ell$ with $0 \leq \ell \leq M$ such that 
\[ A(x_\ell) + 2 \sum_{j=0} ^\ell A(x_j) > A(x_{M-\ell}) + 2 \sum_{j=M-\ell + 1} ^M A(x_j).\] 
\end{prop}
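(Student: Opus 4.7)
Write $D(\ell)$ for the difference between the two sides of the claimed inequality; the goal is to find $\ell$ with $D(\ell)>0$. Rather than analyse this inequality $\ell$ by $\ell$, I plan to prove the stronger identity
\[\sum_{\ell=0}^M D(\ell) \;=\; 2|A|,\]
which, since $|A|>0$ by Definition~\ref{def:strategy_d}, instantly forces $D(\ell)>0$ for at least one index $\ell$.

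The computation is a double-counting (Fubini) exercise combined with the $\mca$ hypothesis. First, the boundary pieces cancel: $\sum_\ell\bigl[A(x_\ell)-A(x_{M-\ell})\bigr]=0$ after reindexing $\ell\mapsto M-\ell$. Second, interchanging the order of summation gives
\[\sum_{\ell=0}^M \sum_{j=0}^{\ell} A(x_j) \;=\; \sum_{j=0}^M (M+1-j)\,A(x_j) \;=\; (M+1)|A| - \sum_{j=0}^M j\,A(x_j),\]
and after the same reindexing $\ell\mapsto M-\ell$,
\[\sum_{\ell=0}^M \sum_{j=M-\ell+1}^M A(x_j) \;=\; \sum_{k=0}^M \sum_{j=k+1}^M A(x_j) \;=\; \sum_{j=0}^M j\,A(x_j).\]
The hypothesis $\mca(A)=1/2$ translates to $\sum_{j} j\,A(x_j) = M|A|/2$, so combining the three pieces,
\[\sum_{\ell=0}^M D(\ell) \;=\; 0 + 2\bigl[(M+1)|A| - \tfrac{M|A|}{2}\bigr] - 2\cdot \tfrac{M|A|}{2} \;=\; 2|A|,\]
which yields the desired conclusion.

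There is no genuinely hard step here; the whole proof is the observation that summing $D(\ell)$ over $\ell$ collapses everything onto the first moment $\sum_j j\,A(x_j)$, which is precisely what the $\mca$ hypothesis controls. One noteworthy side remark is that the non-symmetry hypothesis is not used at all: the conclusion holds for every strategy with $|A|>0$ and $\mca(A)=1/2$. In the symmetric case one can in fact evaluate $D(\ell)$ directly to get $D(\ell)=2A(x_\ell)$, giving the same conclusion by a different route.
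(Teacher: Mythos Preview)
Your proof is correct. Both you and the paper compute $\sum_{\ell=0}^{M} D(\ell)$ by interchanging the order of summation and invoking the $\mca$ constraint; that is the common core idea. Where the two arguments diverge is in how the sum is used.

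The paper argues by contrapositive: assuming $D(\ell)\le 0$ for all $\ell$, it shows the total sum equals $0$, hence every $D(\ell)=0$, and then runs an induction on $\ell$ to deduce $A(x_\ell)=A(x_{M-\ell})$, i.e.\ symmetry. The non-symmetry hypothesis is therefore essential to the paper's route. (A small remark: the paper's computation is actually carried out for the expression $A(x_\ell)+2\sum_{j<\ell}A(x_j)$ rather than $A(x_\ell)+2\sum_{j=0}^{\ell}A(x_j)$ as in the displayed statement, which is why its total sum comes out $0$ instead of $2|A|$; the two versions differ by the nonnegative term $2A(x_\ell)$, so the statement as printed is weaker.)

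Your route is more direct: with the indexing exactly as in the printed statement, the total is $2|A|>0$, so some $D(\ell)>0$ immediately, with no contrapositive and no induction. Your side observation that the non-symmetry hypothesis is unnecessary for the statement \emph{as written} is correct, and your check that $D(\ell)=2A(x_\ell)$ in the symmetric case is a nice sanity confirmation. What the paper's longer argument buys is the stronger conclusion with the strict-lower-index sum $\sum_{j<\ell}$, which is the form actually used downstream in Proposition~\ref{prop2}; for that version one genuinely needs the non-symmetry hypothesis, since the analogous total vanishes.
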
 

Using the preceding proposition, we prove that asymmetric strategies are vulnerable to invasion.  It is worthy to note that we explicitly construct the invading strategy in the proof of this result.  
\begin{prop} \label{prop2} Let $A$ be a strategy that has $\mca(A) = \frac 1 2$ that is not symmetric with respect to $\frac 1 2$.  Then there exists a strategy $B$ that has $\mca(B) = \frac 1 2$ for which 
\[ \wp(A; B) < 0, \quad \wp(B; A) > 0.\] 
\end{prop}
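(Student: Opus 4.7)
The plan is to produce an invading strategy $B$ explicitly using the index supplied by Proposition 1. First I observe that the payoff is antisymmetric in its two arguments: direct expansion of \eqref{eq:def_payoff_d} gives
\[
(|A|+|B|)\,\wp(A;B) \;=\; \sum_{i<j}\bigl[A(x_j)B(x_i) - A(x_i)B(x_j)\bigr] \;=\; -(|A|+|B|)\,\wp(B;A),
\]
so it is enough to exhibit a single strategy $B$ with $\mca(B)=\tfrac12$ satisfying $\wp(A;B)<0$; the complementary inequality $\wp(B;A)>0$ is then automatic.

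Let $\ell \in \{0,\ldots,M\}$ be an index whose existence is guaranteed by Proposition 1. I take the candidate invader to be the two-point strategy supported on the pair of competitive abilities reflected about $\tfrac12$, namely $B := \delta_{x_\ell} + \delta_{x_{M-\ell}}$ (with the natural degeneration to $2\delta_{x_{M/2}}$ when $\ell = M/2$ and $M$ is even, and to $\delta_{x_0}+\delta_{x_M}$ in the endpoint case). Because $x_\ell + x_{M-\ell} = 1$, this construction automatically satisfies $\mca(B)=\tfrac12$, and $B$ is manifestly non-negative.

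To evaluate $\wp(A;B)$, I would partition the outer sum $\sum_{j=0}^{M}$ in \eqref{eq:def_payoff_d} into the five blocks $j<\ell,\ j=\ell,\ \ell<j<M-\ell,\ j=M-\ell,\ j>M-\ell$. Since the support of $B$ is the pair $\{x_\ell, x_{M-\ell}\}$, the bracket $\sum_{i<j}B(x_i)-\sum_{i>j}B(x_i)$ is constant on each block, with values $-2,-1,0,1,2$ respectively. The payoff collapses to
\[
(|A|+2)\,\wp(A;B) \;=\; A(x_{M-\ell}) + 2\!\!\sum_{j>M-\ell}\!\!A(x_j)\; -\; A(x_\ell) \;-\; 2\!\!\sum_{j<\ell}\!A(x_j),
\]
and the condition $\wp(A;B)<0$ becomes precisely the reflected cumulative-tail comparison supplied by Proposition 1. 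Combined with the antisymmetry recorded in Step 1, we conclude $\wp(A;B)<0$ and $\wp(B;A)>0$.

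The main obstacle I anticipate is aligning the precise arithmetic form of the two inequalities: Proposition 1 involves the inclusive cumulative sums $\sum_{j=0}^{\ell}A(x_j)$, while the direct expansion above produces the strict cumulative sums $\sum_{j<\ell}A(x_j)$, the two differing by $A(x_\ell)$ (and symmetrically by $A(x_{M-\ell})$). Reconciling this will likely require either a small reparametrization of the index used in $B$ (shifting $\ell$ by one, or weighting the endpoint Dirac masses) or a short case split according to whether the endpoint values $A(x_\ell)$ and $A(x_{M-\ell})$ vanish. The degenerate cases $\ell\in\{0,M\}$ and (for $M$ even) $\ell=M/2$ reduce to short separate computations since in those cases $B$ collapses to mass at a single point or to the endpoint pair.
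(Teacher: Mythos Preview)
Your approach is essentially the paper's own: it too takes $B$ to be the symmetric two-point mass at $\{x_\ell,x_{M-\ell}\}$ (with weights $|A|/2$ rather than $1$, which is immaterial) and arrives at
\[
\wp(B;A)=\tfrac14\Bigl[2\textstyle\sum_{j<\ell}A(x_j)+A(x_\ell)-A(x_{M-\ell})-2\sum_{j>M-\ell}A(x_j)\Bigr],
\]
the negative of your expression. The obstacle you anticipate is a phantom: the version of Proposition~\ref{prop1} that is actually \emph{proved} (look at the line where the authors sum \eqref{eq:propsymmetric} over $\ell$) and then \emph{applied} in the paper's own proof of Proposition~\ref{prop2} already has the strict sums $\sum_{j<\ell}$ and $\sum_{j>M-\ell}$; the inclusive $\sum_{j=0}^{\ell}$ in the displayed statement is a typo, so your five-block computation matches the needed inequality verbatim and no index shift or case split is required.
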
 

We use all of the preceding propositions to deduce the equilibrium strategies in the special case of two competing species.  
\begin{prop} \label{prop3}  Assume that $(A, B)$ is an equilibrium strategy.  Then 
\[ \wp(A; B) = \wp(B; A) = 0. \] 
Moreover, we have for any strategy $C$, 
\begin{equation} \label{eq:eq_strategy_2} \wp(*; C) \geq 0, \quad * = A, B. \end{equation} 
In case $M$ is odd, all equilibrium strategies are uniform.  In case $M$ is even, all equilibrium strategies have $\mca$ equal to $\frac 1 2$ and further satisfy $A(x_{2j}) = A(x_0)$, $A(x_{2j+1}) = A(x_1)$ for all $j=0, 1, \ldots, \frac M 2$.  Furthermore equality holds in  \eqref{eq:eq_strategy_2} if and only if $\mca(C) = \frac 1 2$. 
\end{prop}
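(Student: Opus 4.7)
The plan hinges on the observation that the two-player game is zero-sum. Directly computing via \eqref{eq:def_payoff_d} and writing $C = A + B$,
\[
\wp(A;B) + \wp(B;A) = \frac{1}{|A|+|B|}\sum_j C(x_j)\Bigl[\sum_{i<j}C(x_i) - \sum_{i>j}C(x_i)\Bigr] = 0,
\]
since the double sum pairs up terms $C(x_i)C(x_j)$ with opposite signs under $i \leftrightarrow j$. This identity is the workhorse throughout.

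Applying the equilibrium property with the uniform strategy $U$ as a candidate deviation, $\wp(A;B) \geq \wp(U;B)$ and $\wp(B;A) \geq \wp(U;A)$, both of which are nonnegative by Proposition \ref{prop0}; zero-sum then forces $\wp(A;B) = \wp(B;A) = 0$. Applying equilibrium with an arbitrary deviation $C$ gives $\wp(C;A) \leq \wp(B;A) = 0$, whence $\wp(A;C) \geq 0$ by zero-sum; symmetrically $\wp(B;C) \geq 0$, proving \eqref{eq:eq_strategy_2}. Specializing to $C = U$ and invoking the equality clause of Proposition \ref{prop0} yields $\mca(A) = \mca(B) = \tfrac{1}{2}$. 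If $A$ failed to be symmetric about $\tfrac{1}{2}$, Proposition \ref{prop2} would supply a strategy $C$ with $\wp(C;A) > 0$, contradicting the invasion-proof property; so $A$ and $B$ are both symmetric.

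The hard part is upgrading ``symmetric with $\mca = \tfrac{1}{2}$'' to the prescribed form. A short calculation using the $i \leftrightarrow j$ swap (which kills the diagonal $CC$ contribution) shows
\[
\wp(C;A) = \frac{1}{|A|+|C|}\sum_j C(x_j)\,\Phi_A(j), \qquad \Phi_A(j) := \sum_{i<j}A(x_i) - \sum_{i>j}A(x_i).
\]
The telescoping $\Phi_A(j+1) - \Phi_A(j) = A(x_j) + A(x_{j+1}) \geq 0$ makes $\Phi_A$ non-decreasing, and symmetry of $A$ makes $\Phi_A$ antisymmetric about $M/2$. Suppose $A$ deviates from the prescribed pattern (not uniform when $M$ is odd, or not obeying $A(x_{2j}) = A(x_0)$, $A(x_{2j+1}) = A(x_1)$ when $M$ is even); then $\Phi_A$ is not identically zero, hence $\Phi_A(j) > 0$ for some index $j^* > M/2$. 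The strategy is now to construct an explicit invader $C$ concentrating weight on indices where $\Phi_A$ is positive, counter-balanced by mass at $x_0$ or other low-valued $x_j$ so that $\mca(C) \leq \tfrac{1}{2}$. This is the main technical obstacle: the $\mca$ constraint couples the available counter-mass to the spread of $\Phi_A$, and for $A$ concentrated near the endpoints (e.g.\ $A = |A|/2 \cdot (\delta_{x_0} + \delta_{x_M})$) the obvious two-point invader $a\,\delta_{x_0} + b\,\delta_{x_{j^*}}$ fails the constraint, forcing one to spread $C$ over three or more support points. Once such $C$ is exhibited, $\wp(C;A) > 0$ contradicts \eqref{eq:eq_strategy_2}, pinning $A$ (and $B$) to the prescribed form. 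Finally, substituting the uniform (resp.\ alternating) form of $A$ into \eqref{eq:def_payoff_d} collapses $\wp(A;C)$ to a closed-form expression proportional to $1 - 2\,\mca(C)$, from which $\wp(A;C) = 0 \iff \mca(C) = \tfrac{1}{2}$ is immediate.
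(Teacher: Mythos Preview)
Your overall architecture matches the paper's: the zero-sum identity forces $\wp(A;B)=\wp(B;A)=0$, the equilibrium inequality then gives $\wp(*;C)\ge0$, and Propositions~\ref{prop0} and~\ref{prop2} pin down $\mca=\tfrac12$ and symmetry. Introducing $\Phi_A(j)=\sum_{i<j}A(x_i)-\sum_{i>j}A(x_i)$ is a clean organizing device that the paper does not make explicit; its monotonicity and antisymmetry are correct.

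There is, however, a genuine gap in the crucial step. First, the sentence ``then $\Phi_A$ is not identically zero, hence $\Phi_A(j)>0$ for some $j^*>M/2$'' is vacuous: $\Phi_A(M)=|A|-A(x_M)>0$ for \emph{every} admissible $A$, including the uniform one, so this conclusion does not use the deviation hypothesis at all. The characterization you actually need is that $\Phi_A$ is \emph{affine} in $j$ (equivalently, $A(x_j)+A(x_{j+1})$ is constant) precisely when $A$ has the prescribed uniform/alternating form; it is the failure of affinity that creates room for an invader exploiting the gap between $\Phi_A$ and its linear interpolant. Second---and this is where the proof stops short---you flag the construction of the invader $C$ as ``the main technical obstacle'' but do not carry it out. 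This is the heart of the argument: one must simultaneously achieve $\sum_j C(x_j)\Phi_A(j)>0$ and $\mca(C)\le\tfrac12$, and as you correctly observe, two-point invaders can fail. The paper resolves this by an explicit case-based construction: for odd $M$ it locates the outermost index $\ell$ where uniformity first breaks ($A(x_\ell)\ne A(x_{\ell+1})$) and places a two-point invader at $\{x_{(M-1)/2},\,x_{M-\ell}\}$ or $\{x_\ell,\,x_{(M+1)/2}\}$ depending on the sign of the break; for even $M$ it locates $\ell$ with $A(x_{\ell-2})\ne A(x_\ell)$ and uses a three-point invader supported on $\{x_{\ell-2},\,x_\ell,\,x_{M-\ell+1}\}$ (or its reflection). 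Without either such an explicit construction or an abstract argument (e.g.\ a separation/convexity argument on the simplex $\{\mca\le\tfrac12\}$) that produces $C$ from the non-affinity of $\Phi_A$, the proof is incomplete at exactly the point where the real work lies.
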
 

In the proof of the preceding proposition, given a non-equilibrium strategy, we give a recipe for the construction of a species that can successfully invade and eliminate the non-equilibrium strategy.  Building upon the case of two competing species, we prove Theorem 1 in the discrete case in a somewhat inductive style in \S \ref{si:continuous_t1}.

\subsection*{Results obtained for the continuous model for the special case of two competing species} 
Similar to the discrete case, we analyze the continuous model by starting with the special case of two competing species. 
\begin{prop} \label{prop1:cont}  In the continuous model, a pair of strategies $(f_1, f_2)$ is an equilibrium strategy if and only if both $f_1$ and $f_2$ are positive constants.  Moreover, they satisfy 
\[ \wp(f_i; g) \geq 0, \quad i = 1,2\] 
for all strategies $g$ with equality if and only if $\mca(g) = \frac 1 2$. 
\end{prop} 
In the proof of the preceding proposition, given a non-equilibrium strategy, we give a recipe for the construction of a species that can successfully invade and eliminate the non-equilibrium strategy.  Building upon the case of two competing species, we prove Theorem 1 in the continuous  case  in \S \ref{si:continuous_t1}.

\subsection*{A comparison of equilibrium strategies and evolutionary stable strategies} 
Here we have obtained a rigorous mathematical proof that the best strategies for competing species have the maximum possible mean competitive ability and will defeat all strategies that have lower mean competitive ability, as one would expect.  Surprisingly, these strategies are 
 \em neutral \em towards all other species that also have a strategy with the maximal mean competitive ability.  An immediate consequence is that there are no evolutionary stable strategies; this is consistent with the results of \cite{mdr2018}.

The distinction between evolutionary stable strategies (ESS) and equilibrium strategies is that whereas an ESS is the best strategy to face a specific challenge, equilibrium strategies should be understood as the best strategies to simultaneously face all challenges.  For example, a specific ecological challenge faced by many photosynthetic microbes is balancing the tradeoff for light and nutrients that are available in opposite concentration gradients in surface waters of lakes and oceans. Although the optimal depth that best balances the tradeoff for light and nutrient resources may be an evolutionary stable strategy \cite{klausmeierlitchman}, that does not apply when conditions change, such as when predators appear.  Algal aggregations within a single depth or a patch have been shown to be necessary for survival of zooplankton predators \cite{mullinbrooks}, and empirical research has supported this prediction \cite{mendendeuerfredrickson}.  In general, plankton inhabit a patchy environment and contribute to environmental heterogeneity \cite{durhamstocker}.  So, while an ESS for a particular scenario of constraints may be found, such as for resource acquisition, it would not apply to other ecological challenges like predator avoidance. 
 
Equilibrium strategies are characterized by maximal phenotypic heterogeneity in the sense that the competitive abilities are spread over the entire range of possible values rather than clustered around the mean.  For any strategy that is not an equilibrium strategy, in \S \ref{si:discrete} for the discrete model and \S \ref{si:continuous} for the continuous model we design a competitor strategy that is cumulatively equally fit, yet due to the particular features of its trait distribution can outcompete the non-equilibrium strategy, leading to the extinction of the less variable competitor.  For example, in the discrete model an invariant or degenerate strategy, lacking variability, has a probability of 0 for every competitive ability not equal to the mean.  One of many strategies that will defeat the invariant strategy is Lake Wobegon.\footnote{We named this ‘designer distribution’ Lake Wobegon based on a  North American radio show (Prairie Home Companion) because the feature of that distribution is that although the mean competitive ability is 0.5 (as for all) but for this strategy, one individual has a competitive ability of 0 which results in all others having a slightly above average competitive ability, depending on population size. As the saying in the show went, ‘all the children {except one} are above average.”}   Such a strategy assigns the majority of individuals' competitive ability slightly above the mean competitive ability, while one individual is assigned competitive ability 0, in this way guaranteeing that the mean competitive ability of the Lake Wobegon strategy is equal to that of the invariant strategy. Similarly, we give a recipe for constructing a superior strategy to outcompete any non-equilibrium strategy in \S \ref{si:discrete} for the discrete model and in \S \ref{si:continuous} for the continuous model.  We do not claim that these designer strategies reflect biological realism.  Instead, they demonstrate that the equilibrium strategies are superior in their ability to coexist with any other strategy and resist replacement through competition or invasion. 

Considering the heterogeneous and dynamic environment inhabited especially by marine microbes, the lack of an evolutionary stable strategy \cite{mdr2018} is logical and drives the need for a more general concept.  Accommodating the specific characteristics of microbial populations, we analyzed the fitness of species by interpolating between individual-level competitions and species-level population growth or decline.  Previous work has conceptualized species coexistence in microbial communities by invoking spatial or temporal heterogeneity in some form \cite{kerr2002, dolson2017, hart2017}. Consequently, when placed in a common environment, species must have different mean fitness, in the sense that they must be locally adapted to different environmental niches. In contrast, our theory shows that multiple species with the same average fitness in the same environment (niche) can coexist indefinitely, as long as individuals within those species vary maximally in their competitive abilities.   The theory predicts survival of the most variable and equally cumulatively fit individuals and thus predicts co-existence of such species.  This finding implies that selection in generally adapted species favors maintenance of variability rather than the evolution of optimized trait values but also that the production of intra-specific or intra-genotype variation in trait values is adaptive. The genetic maintenance of intra-genomic variability has been reported \cite{bruijning2020}. The distribution of risk afforded by phenotypic heterogeneity also points to a potential adaptive reservoir in physiological, behavioral and genetic diversity in light of large-scale changes occurring in ecosystems, including climate change \cite{moran2016, kaye2017}.  

\subsection*{Intra-specific phenotypic variability thwarts invasion, promotes co-existence, and allows emergence of new species.} 
Having identified the key role of phenotypic heterogeneity in species competition and coexistence, we explored the ramifications of specific distributions of competitive strategies commonly used in ecological models, such as gaussian, bimodal, and degenerate distributions.  These types of distributions are often assumed to underlie the metric under study \cite{gotelli2008}. To visualize the effects of specific strategies representing different degrees of phenotypic heterogeneity on the outcomes of competition, persistence and invasion, we utilized an individual-based competition simulation that emulates the rules of our game theoretic approach \cite{mdr2014}. Groups of individuals, which can represent populations or species, compete in a randomized design as depicted in Fig \ref{fig:2}.  These competitions are based on the discrete model. 
We assume no heritability of a specific competitive ability across generations and rather invoke the maintenance of the strategy at the species level. This assumption is well supported by empirical observations that individuals modulate key characteristics, including variations in growth rate, as a function of the presence of conspecifics or competing species \cite{wolf2017, collinsschaum}. This shows that rather than traits that are strictly executed based on environmental conditions, traits are further modulated by which species or individuals are present. The outcomes of these competitions on population abundance are evaluated in an individual based model as shown in Figs \ref{fig:3a}, \ref{fig:3b}, and \ref{fig:3c}. In each instance, the simulations reproduce the exact outcomes predicted by the mathematically-derived payoff functions.  These simulations provide a visualization of the game theoretic model for the discrete case.

\begin{figure}
 \centering
  \begin{minipage}{\textwidth}
    \includegraphics[width=\textwidth]{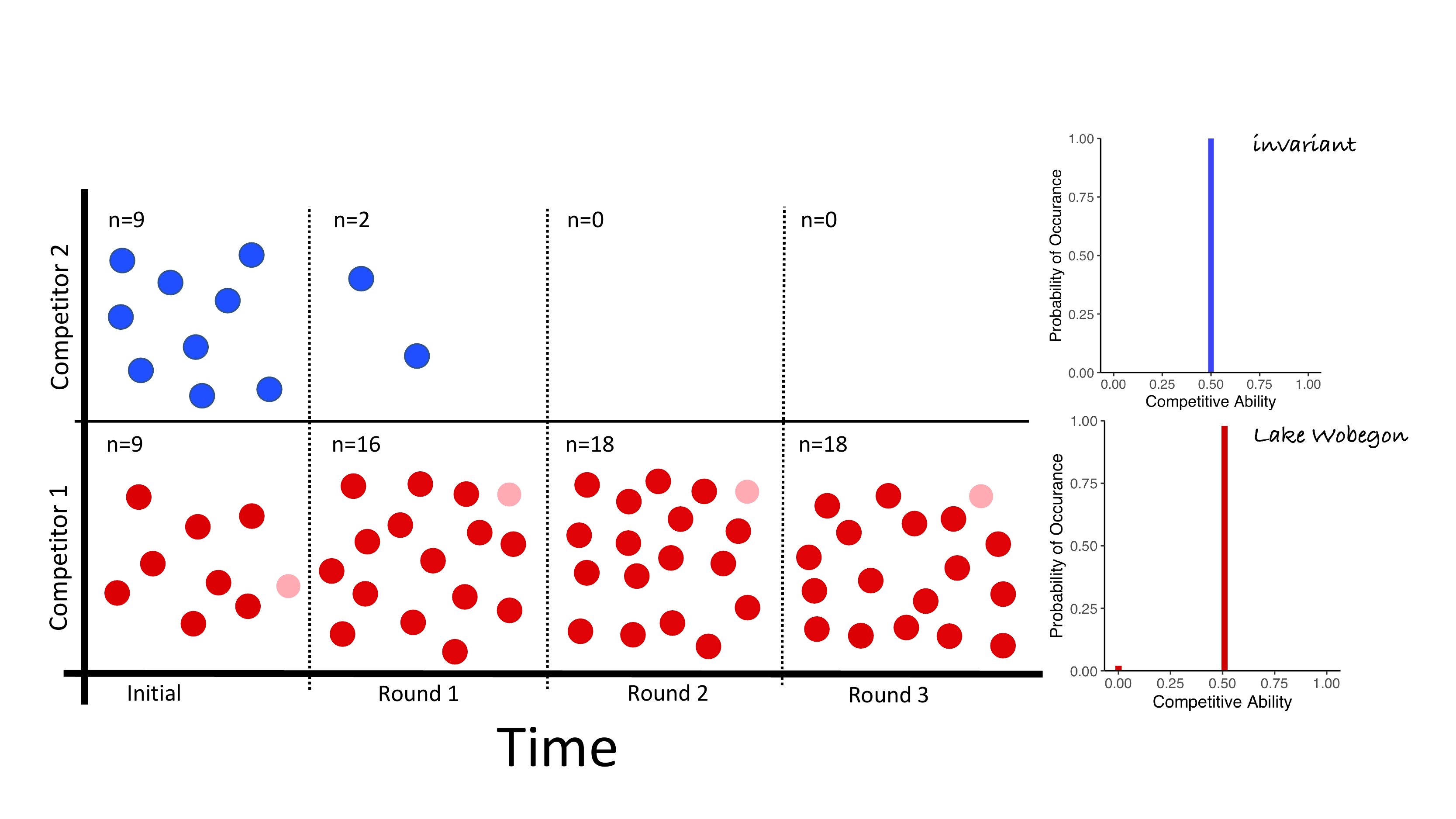}
    \caption{Three rounds of competition are simulated for two strategies.  Probability distributions of strategies are illustrated on the right with individuals (n) represented by color coded circles. Here the invariant strategy (top, blue), characterized by lack of variability and a constant competitive ability equal to the mean, is eliminated by the Lake Wobegon strategy (bottom, red), characterized by one individual with 0 competitive ability and all others are ‘above average’ (mean+x for a small $x>0$).}
        \label{fig:3a}
  \end{minipage}
\end{figure}

\begin{figure}
  \centering
  \begin{minipage}[b]{\textwidth}
    \includegraphics[width=\textwidth]{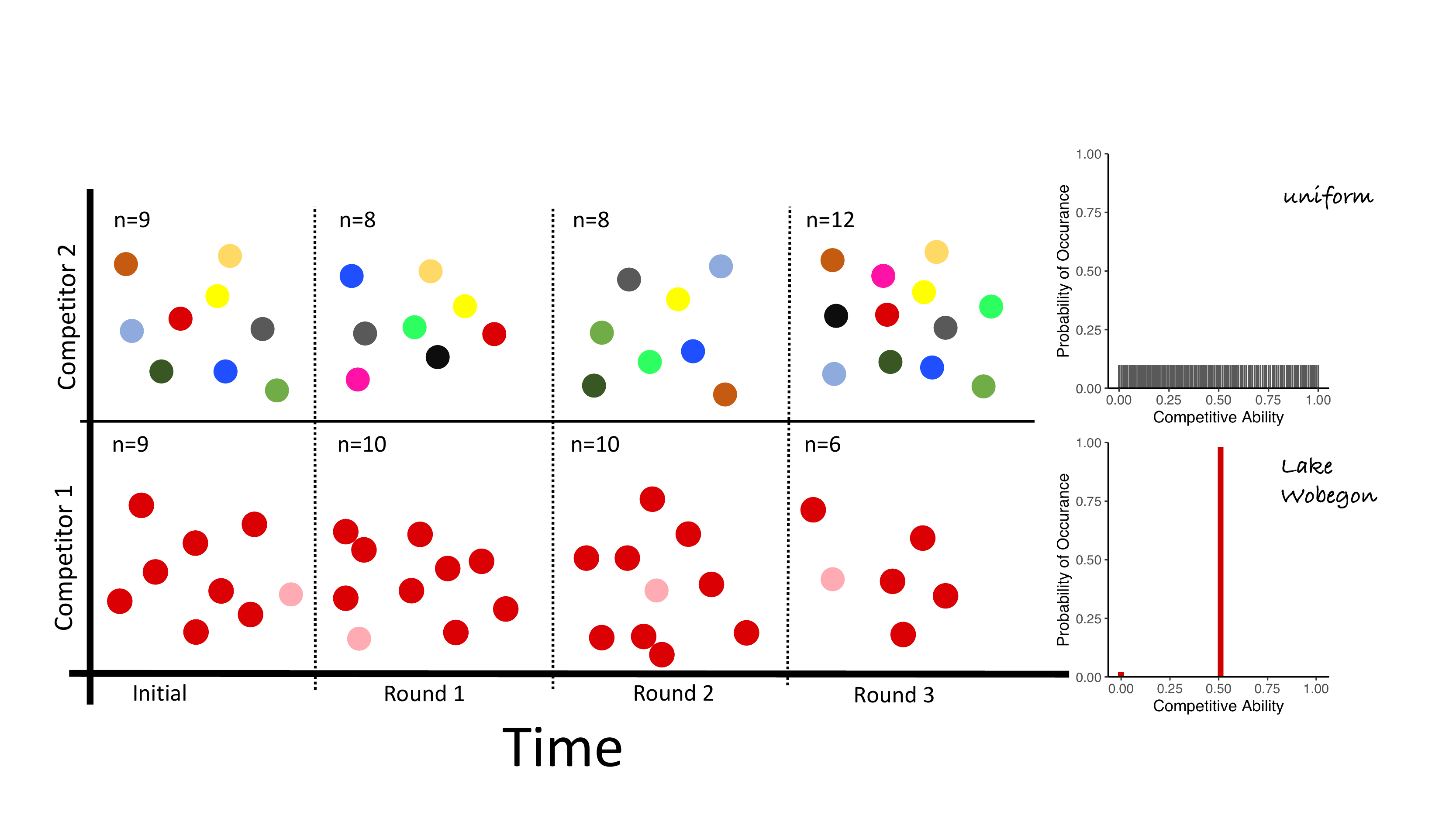}
    \caption{In the same type of three time-step competition as in Fig \ref{fig:3a}, the Lake Wobegon strategy coexists with an equilibrium strategy (top, multicolored).  Note the color range for the equilibrium strategy reflects its maximal phenotypic variability.}
    \label{fig:3b} 
  \end{minipage}
 \end{figure} 
 \begin{figure}[h] \centering
  \begin{minipage}[b]{\textwidth}
    \includegraphics[width=\textwidth]{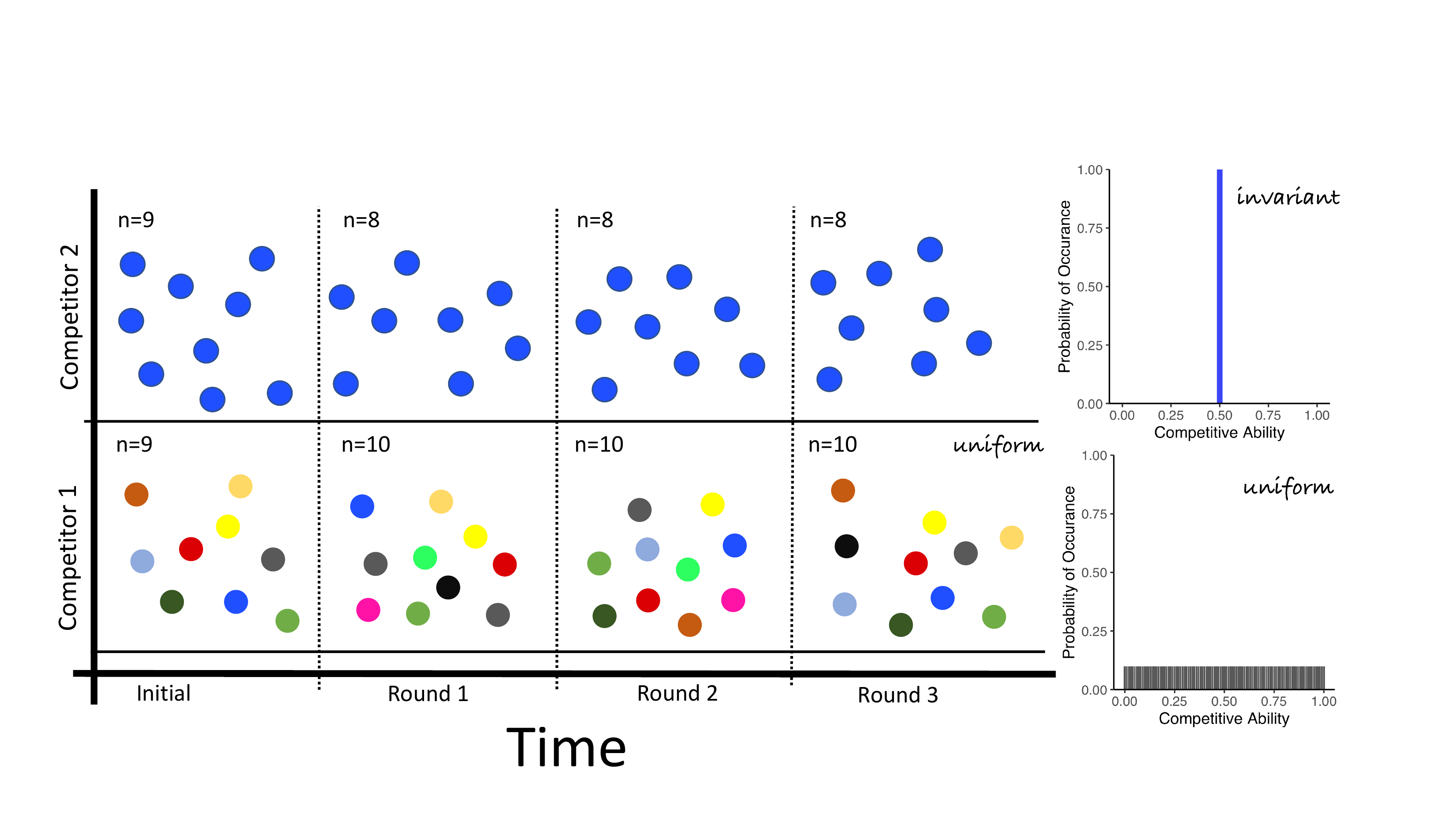}
    \caption{An equilibrium strategy also co-exists with the invariant strategy.}
    \label{fig:3c}
  \end{minipage}
\end{figure}

Most notably, our theory predicts that equilibrium strategies are capable of co-existing with any other strategy that has the same mean competitive ability and cannot be replaced by another strategy that has the equal or lower mean competitive ability. The predicted co-existence of the maximally  variable, uniform distribution is illustrated for some competitor strategies in Fig \ref{fig:4A}.  The theory also predicts that a population characterized by an equilibrium strategy starting orders of magnitude lower in abundance than its competitors can persist, while an invariant population at high abundances can be replaced.  Again the simulations display the theoretical predictions as shown in Fig \ref{fig:4BC}.  The robustness of the equilibrium strategies starkly contrasts the weakness of the invariant strategy, where a population is characterized only by its average competitive ability with no variability about the mean.  Remarkably, several strategies can be identified that can replace the invariant strategy as shown in Fig \ref{fig:4BC}, suggesting that the invariant strategy should not persist when other populations of equal mean fitness are present. Thus, a common practice of representing empirical measurements of, for example, microbial physiology as averages may mask the important information of variance in physiological capacity. Population size, with exception of very small population size as discussed in \S \ref{ss:extinction}, and duration does not affect these results.  Naturally, there are cases where extinctions are observed as shown in \S \ref{ss:extinction}. Our simulations faithfully reproduce the predictions of the competitive exclusion principle \cite{hardin1960} when competition involves an overall inferior competitor, characterized by a lower mean competitive ability; see \S \ref{ss:extinction}. Extinctions of equilibrium strategies are observed in cases when the population size is very low (10s of individuals) because equilibrium strategy distributions cannot faithfully be reproduced when few individuals represent it; see \S \ref{s:simulations}. This is consistent with chance, rather than selection, dominating persistence for very small populations.  Aside from these explainable deviations, the finding that a maximally-variable distribution of ecological traits will persist in competition has remarkable implications for the ecology and evolution of free living microbes and how we should study them. Maximal intra-specific variability supports unlimited co-existence of species with equal mean fitness, as demonstrated in a simulation with 100s of populations, all represented by maximally intra-specific strategies as shown in Fig \ref{fig:5}.

\begin{figure}[h]
\centering
   \begin{subfigure}{0.49\linewidth} \centering
     \includegraphics[width=\textwidth]{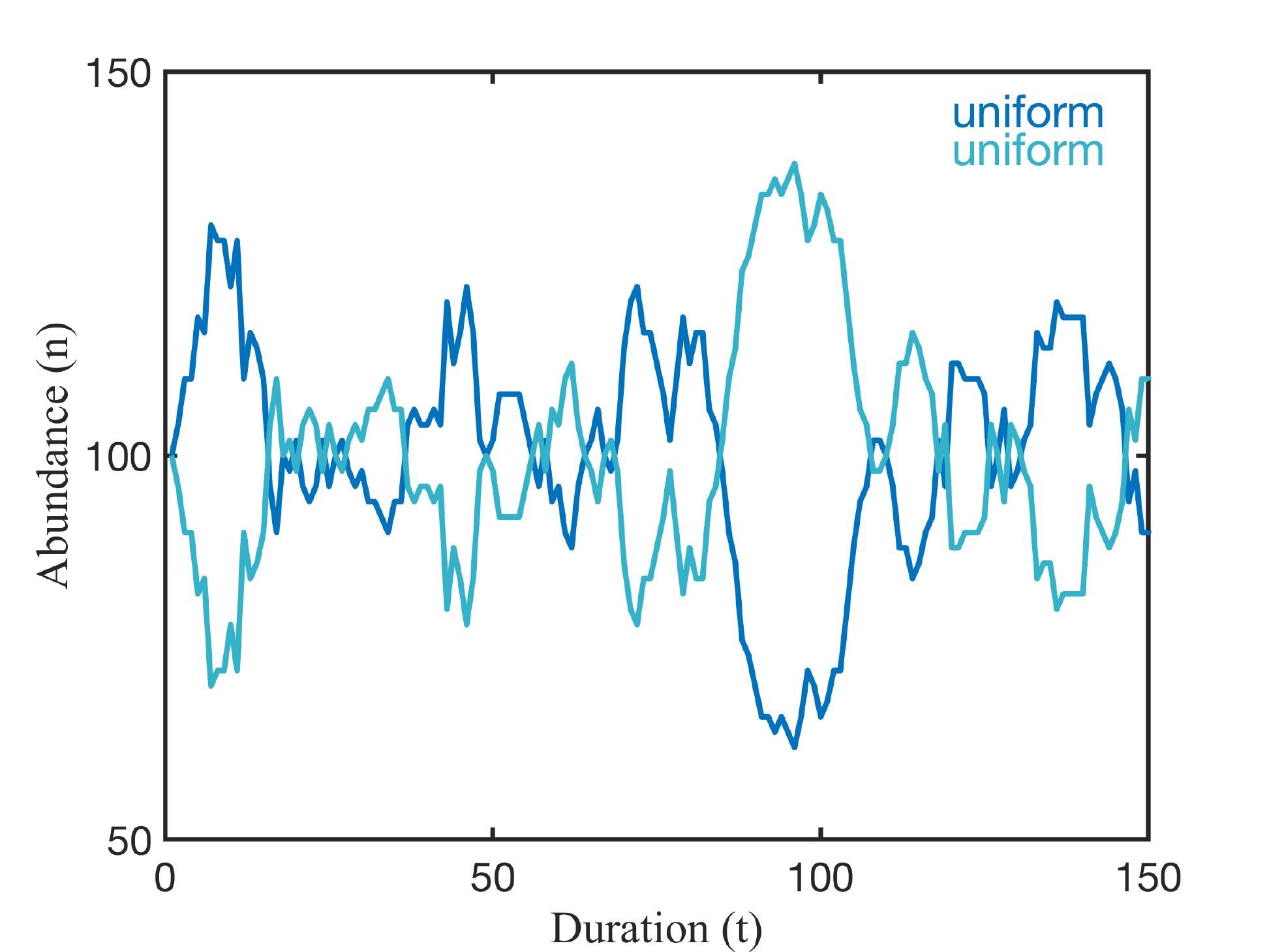}
   \end{subfigure}
   \begin{subfigure}{0.49\linewidth} \centering
     \includegraphics[width=\textwidth]{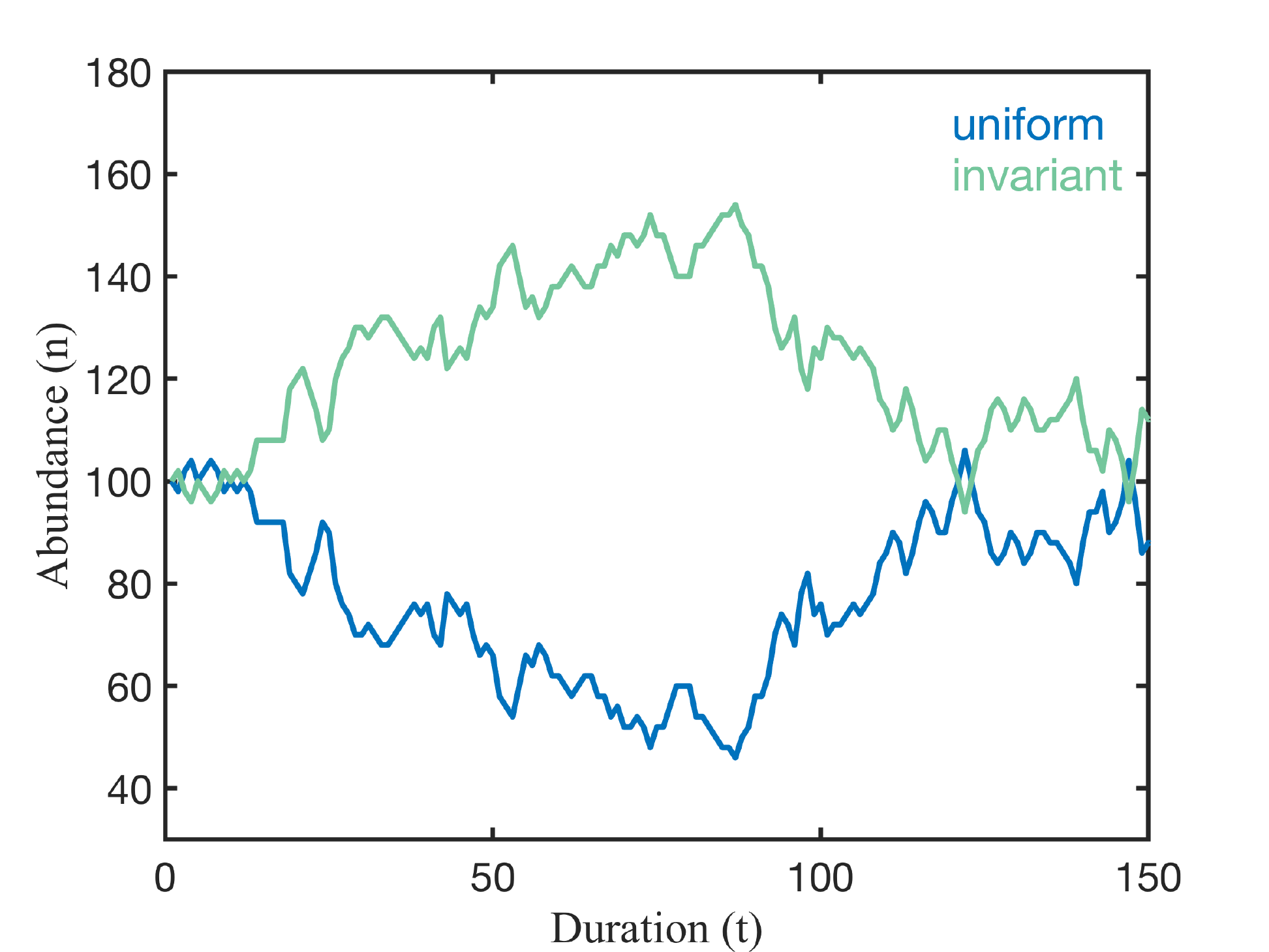}
   \end{subfigure}
     \begin{subfigure}{0.5\linewidth} \centering
    \includegraphics[width=\textwidth]{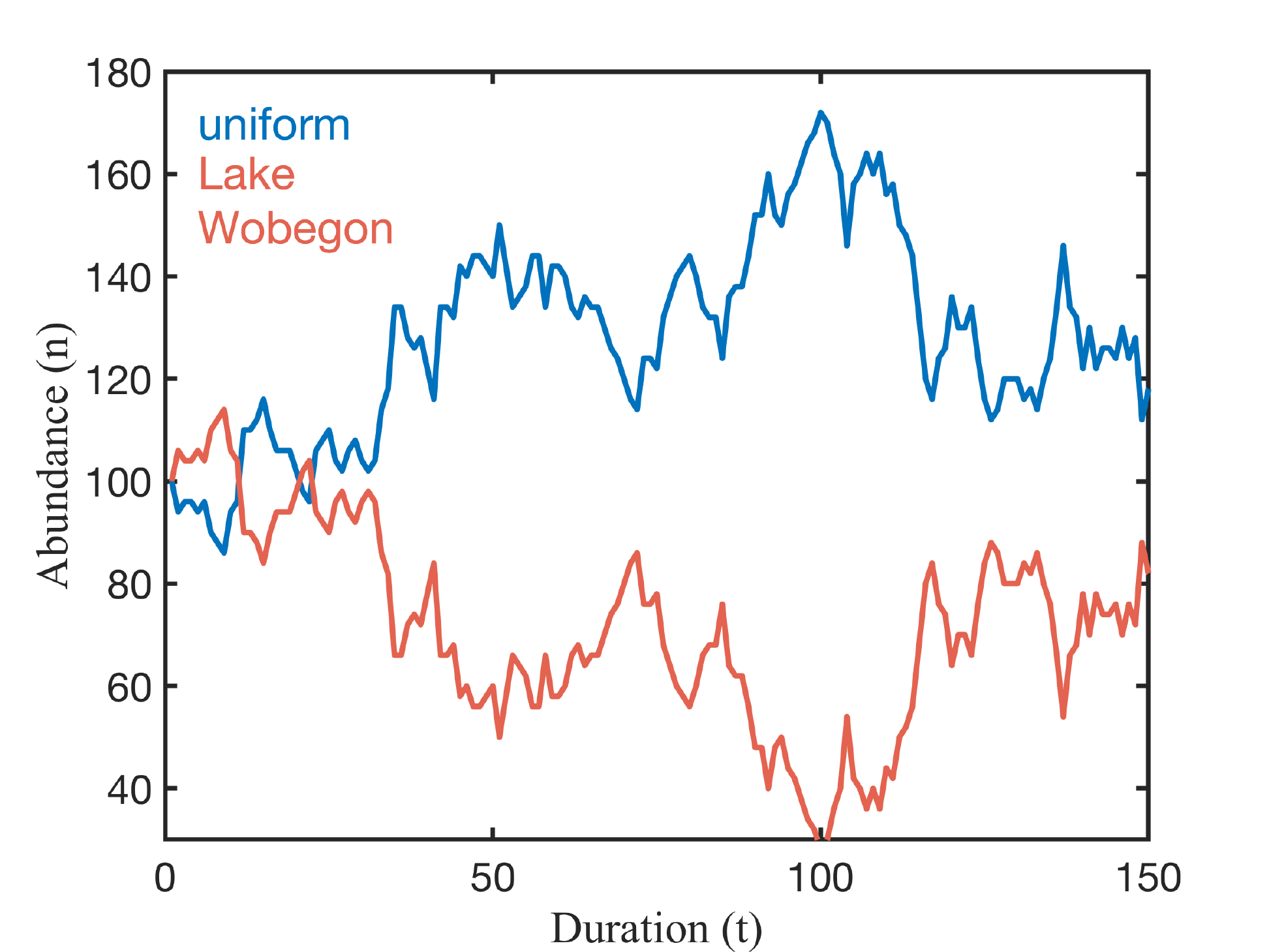}
   \end{subfigure}
\caption{A uniform distribution is an equilibrium strategy.  It co-exists with all strategies that have the same mean competitive ability.} \label{fig:4A}
\end{figure}

\begin{figure}
\centering
   \begin{subfigure}{0.49\linewidth} \centering
     \includegraphics[width=\linewidth]{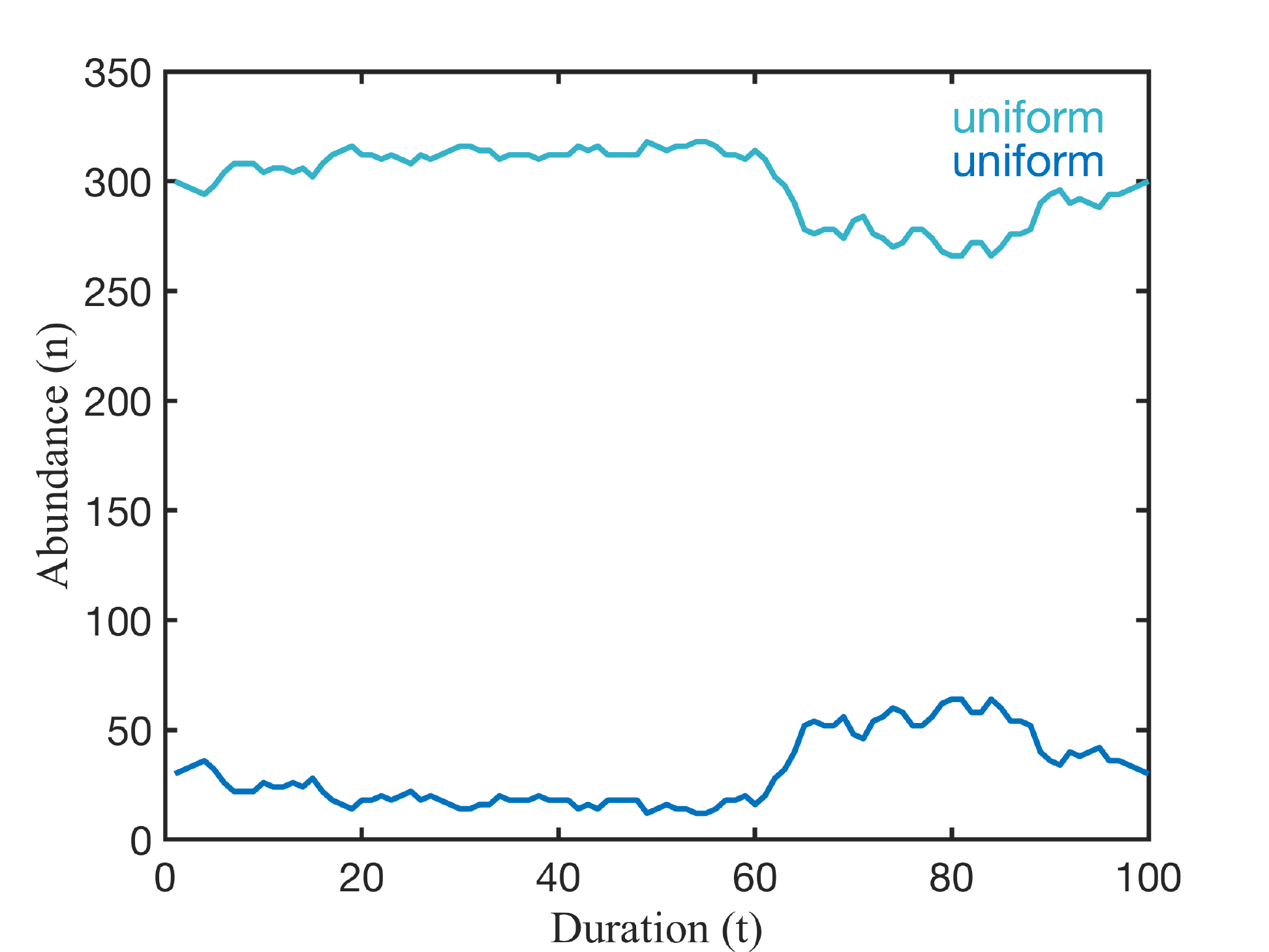}
   \end{subfigure}
   \begin{subfigure}{0.49\linewidth} \centering
     \includegraphics[width=\linewidth]{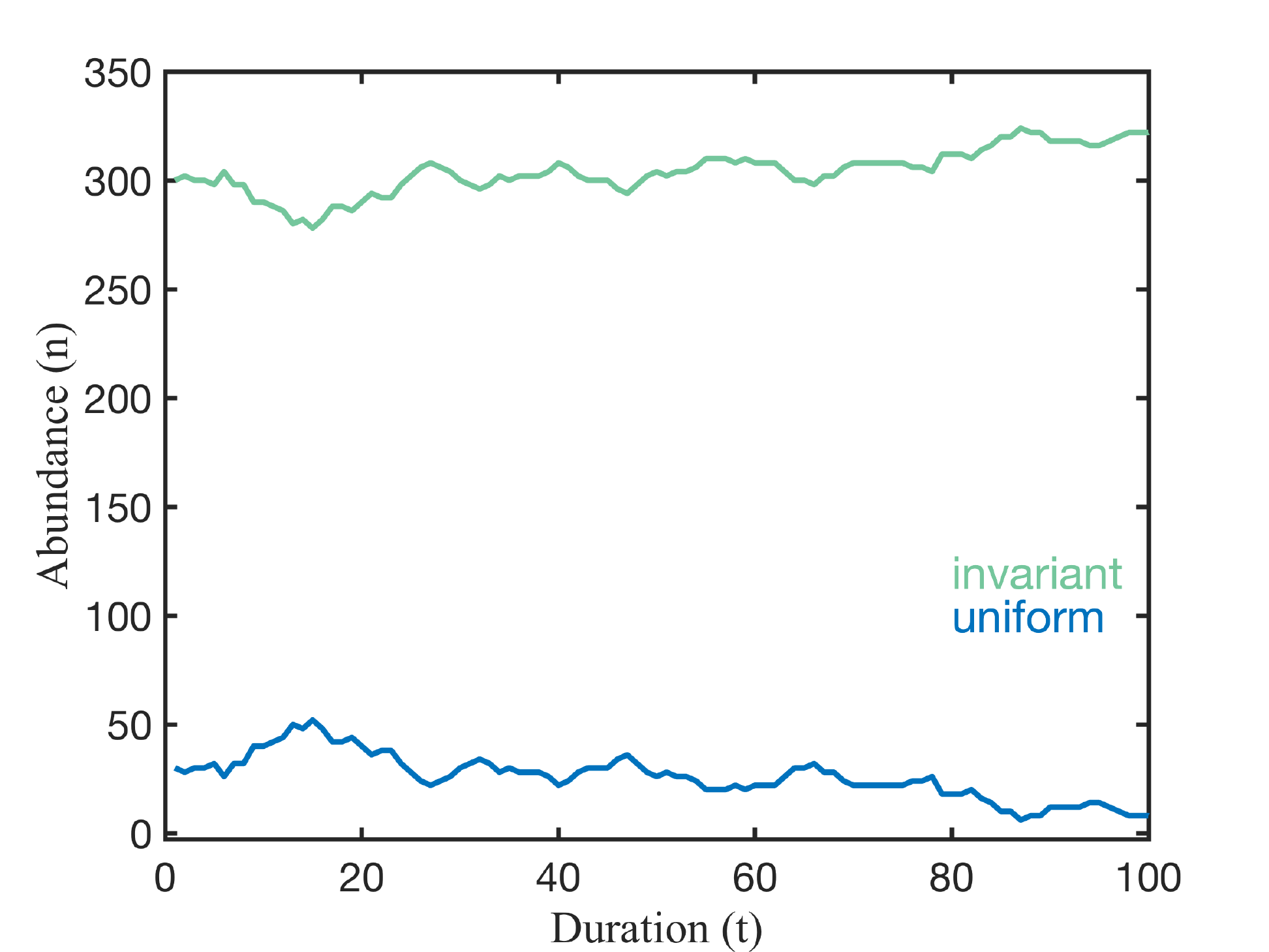}
   \end{subfigure}
     \begin{subfigure}{0.49\linewidth} \centering
     \includegraphics[width=\linewidth]{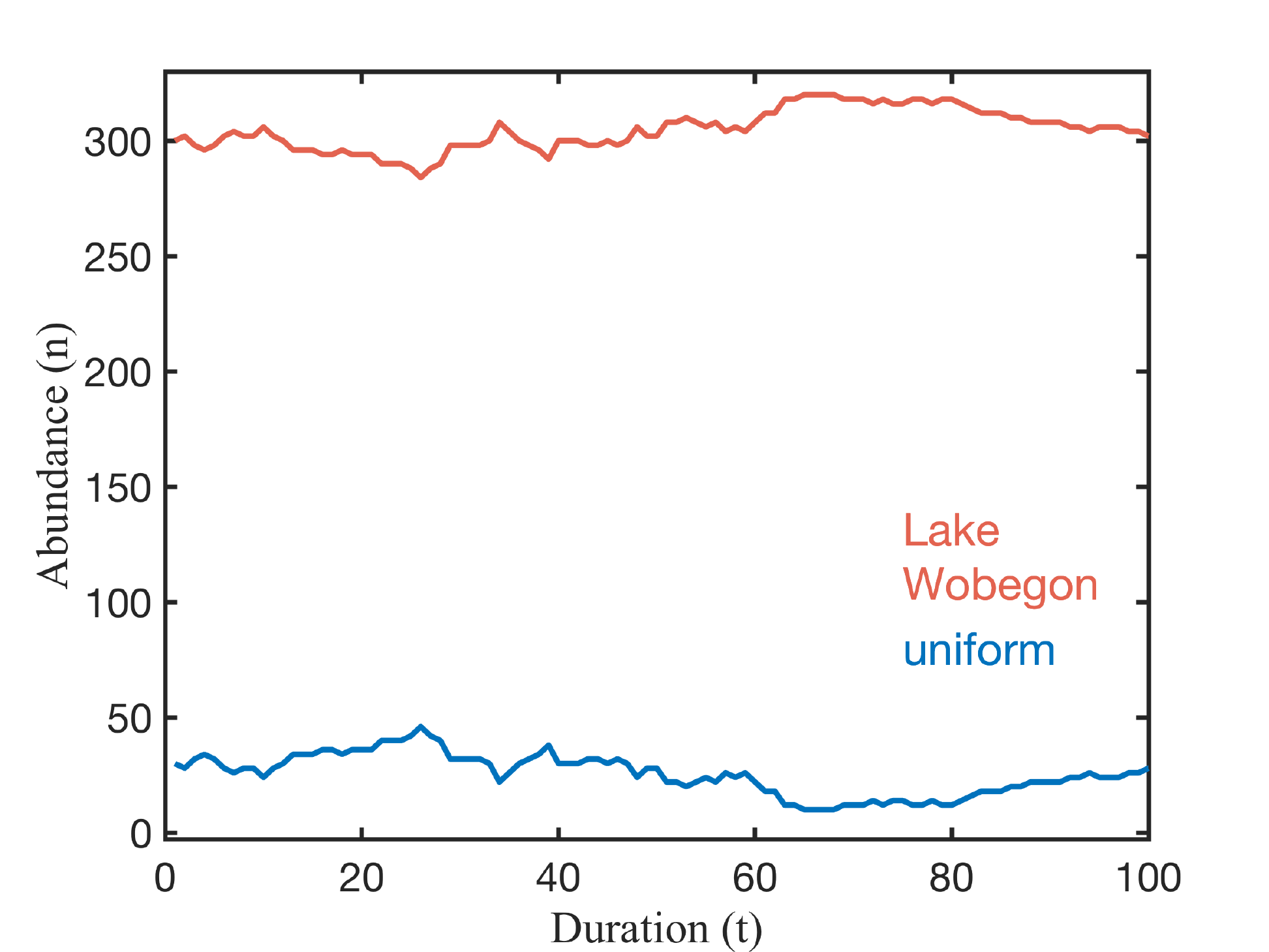}
     \end{subfigure} 
     \begin{subfigure}{0.49\linewidth} \centering
     \includegraphics[width=\linewidth]{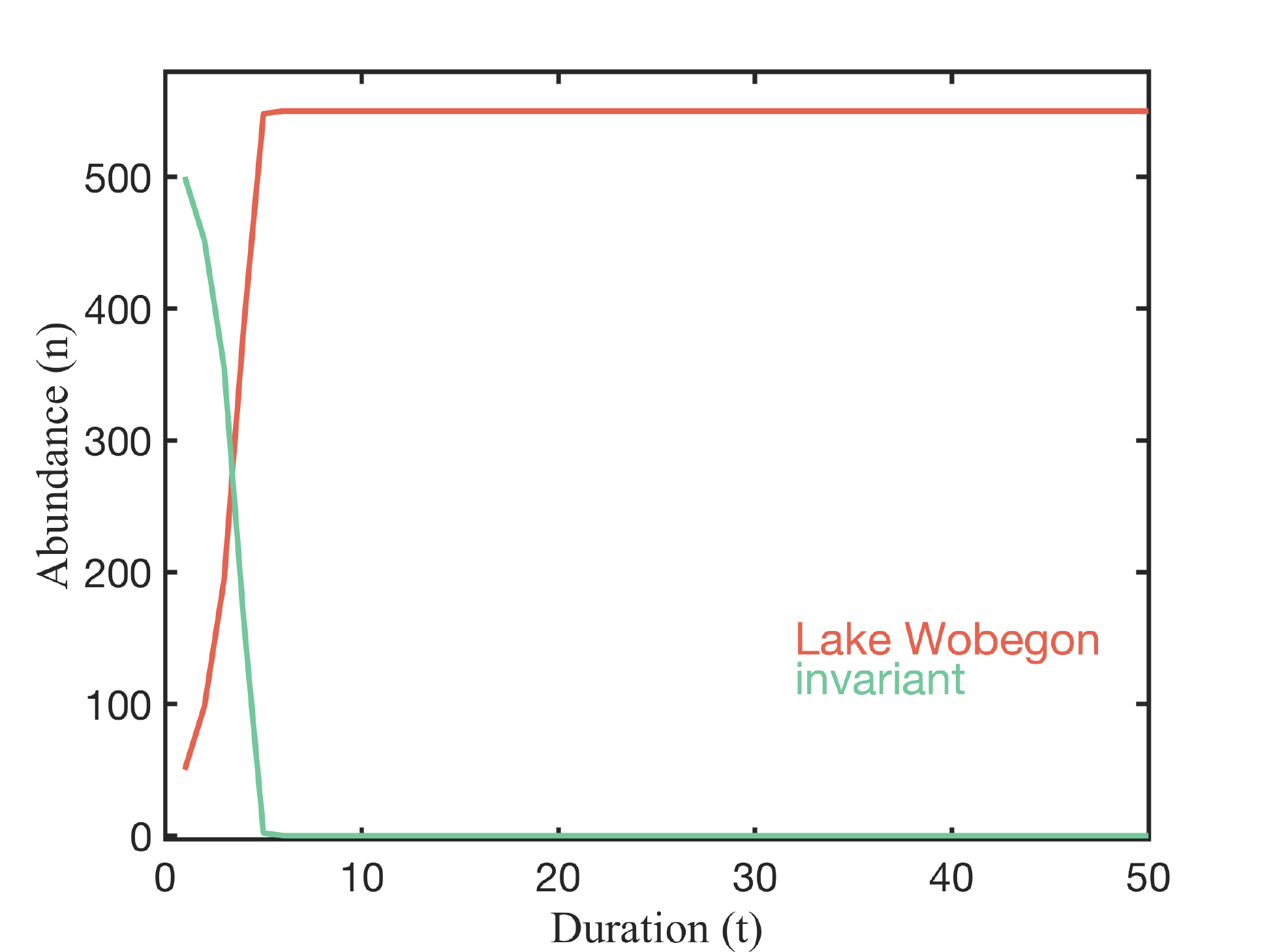}
   \end{subfigure}
\caption{The uniform distribution, an equilibrium strategy, persists even when their population is vastly lower than the competitor's.  In contrast, the invariant strategy starting at a much higher population is quickly replaced by the Lake Wobegon strategy.  Note that the mean competitive abilities of all strategies are identical.} 
\label{fig:4BC}
\end{figure}

These results provide a fundamentally different mechanism for species co-existence than those previously advanced.  There are several prior demonstrations that multiple species can coexist in the same niche or consuming the same resource, thus circumventing the competitive exclusion principle, such as two predators existing on different life stages of the same prey species, or asynchronous introduction of two competitors into a system \cite{armstrongmcgehee} and references therein, \cite{kerr2002}. These previously advanced solutions are similar in that they introduce some asynchrony in the competing species, thus essentially separating competitors and eliminating direct competition.  This is equivalent to invoking environmental patchiness in either space or time to increase the number of available niches.  This type of asynchrony likely enhances species co-existence in numerous cases but is fundamentally an extension of the competitive exclusion principle; it expands the number of niches but still relies on an assumption of one species per niche.  Our results agree with the two species model in \cite{barbarasdandrea} but contradicts their suggestion that in multi-species competition, intraspecific variability depresses biodiversity. There is no mathematical contradiction because \cite{barbarasdandrea} considered a single trait with variations around a normal distribution whereas we mathematically analyze all traits and all distributions.  The mechanism advanced here directly confronts competitors and is flexible in terms of the assumptions regarding population sizes and other specifics. We show that 100's of species can coexist in the same niche despite direct cell-cell competitions simply by incorporating the high phenotypic heterogeneity that is one of the empirically observed, fundamental characteristics of microbial populations.

This study also identifies a novel adaptive advantage (persistence in the face of competitors) and consequence (maintenance of diversity) of phenotypic heterogeneity. Phenotypic heterogeneity has previously been suggested to be adaptive because it can increase survival in heterogenous or fluctuating environments by allowing genotypes to bet-hedge \cite{ackermann2015}, but there are no studies addressing how phenotypic heterogeneity may be advantageous in a homogeneous environment, and why natural selection might favor a particular shape of phenotypic heterogeneity. We demonstrate that because competitive interactions occur between individuals, the heterogenous condition can just as easily stem from the phenotypic heterogeneity in capabilities of competitors rather than features of the abiotic environment. Our findings explain how very high levels of genotype diversity \cite{rynearsonambrust, kashtan2014} can be maintained without invoking niche partitioning within species. 

 \begin{figure}[h] \centering
  \begin{minipage}[b]{\textwidth}
    \includegraphics[width=\textwidth]{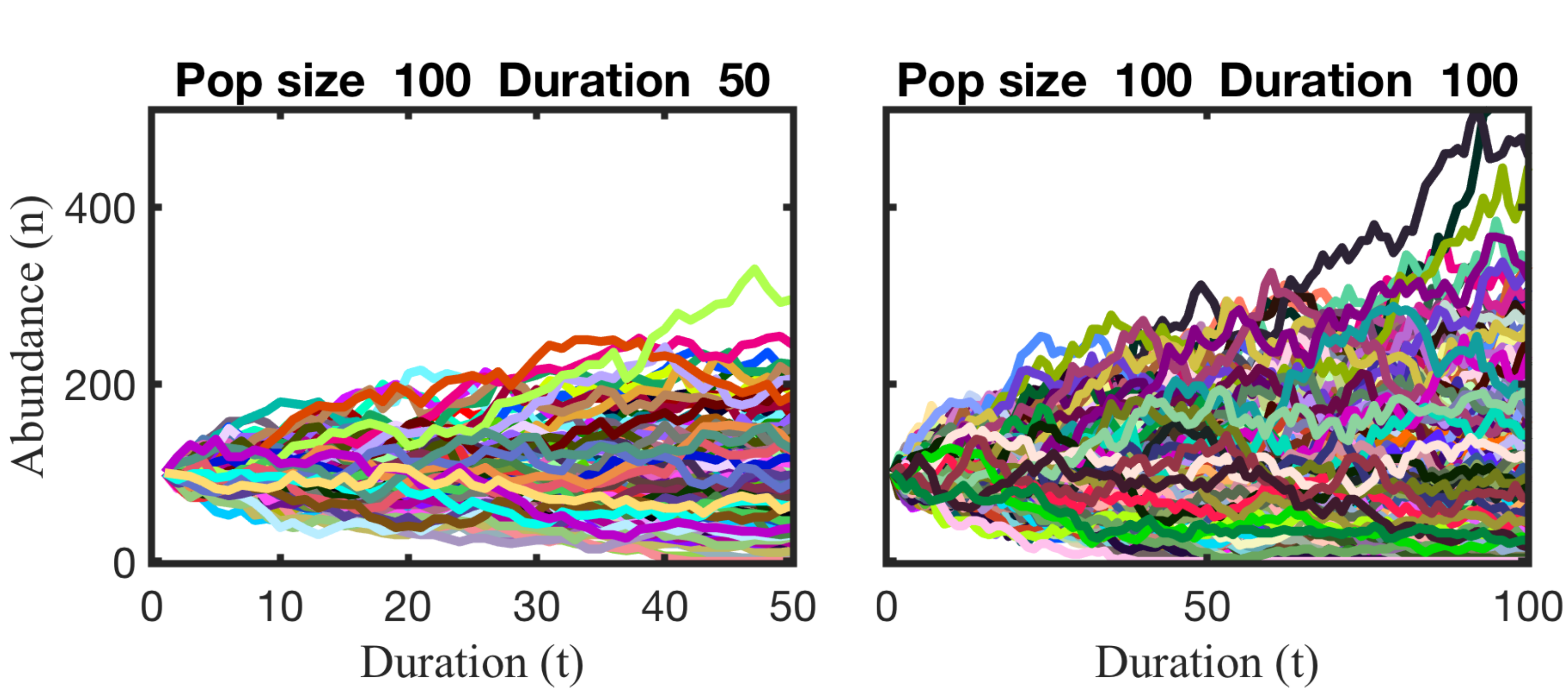}
    \caption{Abundance of 101 (left) and 301 (right) equilibrium strategies in 50 to 100 rounds of competition. Although abundance fluctuates over time and no one population dominates, there are less than a handful of observable extinctions. The number of co-existing species displayed is limited by CPUs and visibility. Theoretically, the number is infinite.}
    \label{fig:5}
  \end{minipage}
\end{figure}

\section*{Conclusion:  A paradigm shift is needed to adequately conceptualize microbial populations.}
Our theory implies that a shift in perspective is needed when conceptualizing microbial populations. A microbe-centric perspective is necessitated by the fact that microbial ecology and evolution are subject to very different principles than the multi-cellular organisms upon which much biological theory was developed. The fundamental importance of microbes requires that the shape of the distribution of competitive strategies be investigated as a possible unifying and structuring principle governing the biodiversity of asexually reproducing microbes. We predict that maximally variable trait distributions will be typical, because of their adaptive features. Our results also suggest that high phenotypic heterogeneity is expected in populations that persist through multiple rounds of competition even in the absence of environmental patchiness and that such heterogeneity may largely reflect inherent characteristics of species \cite{matson2016} rather than methodological limitations or differences \cite{schaum2012}.  Interestingly, the discovered equilibrium strategies display the very same features observed in microbes: (1) these strategies have maximal intra-specific variability and (2) there are an unlimited number of these strategies which are neutral towards each other, implying they cannot universally competitively exclude each other.  The mathematics reflects the biological qualities of microbes.  This fundamental model for microbes can accommodate more complicated variations, which we have not explored here in order to convey the fundaments of the theory.

Disease dynamics, ecosystem function and evolutionary insights can be propelled forward by accounting for the direct linkage from individuals to communities to macroscopic processes such as the abundance and distribution of organisms without the loss of insight imparted by averaging out variability.  We suggest that the inherent phenotypic heterogeneity may be an essential determinant of microbe ecology. Remarkably, the role of phenotypic heterogeneity, is in direct contrast to diversity maintenance mechanisms that rely on ever-increasing numbers of niches because of spatial and temporal heterogeneity \cite{levy2015, hart2017}, or more inclusive lists of traits that underlie fitness \cite{cadier2019}. If true, our theory suggests a fundamentally different mechanism, one where trait selection favors variable distributions, rather than an optimum value with minimal variation. Furthermore, the distribution is at least as important as the average.  There should be stabilizing selection on the generation of phenotypic heterogeneity in an adapted population, rather than a minimization of that variation around an optimum value. 

A major implication of these findings is that entropy in the system always arises. Biological diversity continuously expands. It has not escaped our notice that we may have also discovered a mechanism to explain one of the fundamental laws in biology, that of ever-increasing complexity in evolving systems \cite{mcsheabrandon}.  Our focus here was chiefly on marine microbes because we can support our arguments with empirical observations and constrain descriptions to specifics.  Yet, there is no reason why the same rules should not apply to other, free-living microbes.  Rather than habitat, individual variability is a key factor in species persistence and diversity. Based on the mathematics, there is nothing to suggest that other free-living microbes would not be subject to similar principles. In fact, we believe that here we present a unifying theory that phenotypic heterogeneity enhances biodiversity.

\section*{Acknowledgments}
We thank C.-J. Karlsson and B. Chen for providing constructive criticism on a preliminary draft of this manuscript and H. McNair and P. Marrec for help with the figures.  This work was supported through the National Aeronautics and Space Administration programs: EXport Processes in the global Ocean from RemoTe Sensing (EXPORTS) field campaign (grant 80NSSC17K0716), the Swedish Research Council Grant 2018-03873, and the Australian Research Council Grants ARC DP190103302 and ARC DP190103451. Further support was provided by National Science Foundation awards 1736635 (SMD), 1638834 (TAR) and 1543245 (TAR). National Science Foundation award DMS-1440140 supported JR at the Mathematical Sciences Research Institute in Berkeley, California.

\section*{Supporting information} 
\subsection*{Mathematical proofs for the discrete model} \label{si:discrete} 
We determine the equilibrium strategies by proving several propositions, building up from the case of two competing strategies to arbitrarily many competing strategies.  

\begin{prop} \label{prop0}  A \em uniform \em strategy is defined for $|U|>0$ to satisfy $U(x_j) = \frac{|U|}{M+1}$ for all $j$.  Then for any strategy $A$, with $|A|>0$ and $\mca(A) \leq \frac 1 2$, we have 

\[ \wp(A; U) \leq 0, \quad \wp(U; A) \geq 0,\] 
with equality if and only if $\mca(A) = \frac 1 2$.
\end{prop}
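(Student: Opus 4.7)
The plan is to unfold the payoff definition \eqref{eq:def_payoff_d} in the two-species case with $A_1=A$, $A_2=U$, and exploit two features: the zero-sum nature of competition inside a single species, and the extreme symmetry of $U$. After these two simplifications, $\wp(A;U)$ should reduce to a closed-form expression proportional to $2\,\mca(A)-1$, from which both inequalities follow immediately.

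First I would show that the self-interaction part of $\wp(A;U)$ vanishes. Writing $\sum_\ell A_\ell(x_i) = A(x_i)+U(x_i)$ and splitting, the term $\sum_j A(x_j)\bigl[\sum_{i<j}A(x_i) - \sum_{i>j}A(x_i)\bigr]$ is zero: interchanging the dummy indices $i\leftrightarrow j$ in one of the inner sums shows the two pieces cancel. This is the standard observation that a zero-sum game played by a species against itself contributes nothing to its own payoff.

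The surviving cross term is $\sum_j A(x_j)\bigl[\sum_{i<j}U(x_i)-\sum_{i>j}U(x_i)\bigr]$. Uniformity of $U$ reduces each inner sum to a constant multiple of $j$ or $M-j$; substituting $x_j=j/M$ and using the defining identities $|A|=\sum_j A(x_j)$ and $|A|\,\mca(A)=\sum_j x_j A(x_j)$ should collapse the whole expression to a strictly positive multiple of $2\,\mca(A)-1$. The constraint $\mca(A)\le 1/2$ then yields $\wp(A;U)\le 0$, with equality precisely when $\mca(A)=1/2$.

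For the companion inequality I would simply apply the same self-cancellation observation to the combined distribution $A+U$, using linearity of \eqref{eq:def_payoff_d} in the outer weight $A_k(x_j)$: this yields $\wp(A;U)+\wp(U;A)=0$, so $\wp(U;A)=-\wp(A;U)\ge 0$, again with equality iff $\mca(A)=1/2$. I do not anticipate a real obstacle; the argument is bookkeeping within \eqref{eq:def_payoff_d} guided by the symmetry of $U$, and the only mildly delicate point is tracking the factor of $M$ that arises from converting between the summation index $j$ and the competitive ability $x_j=j/M$.
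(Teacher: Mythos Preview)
Your proposal is correct and follows essentially the same route as the paper: the paper's proof also drops the self-interaction term (tacitly, writing $\wp(A;U)$ directly with only $U$ in the bracket), then uses uniformity of $U$ to reduce the cross-term to $\frac{M|A||U|}{(M+1)(|A|+|U|)}\bigl(2\,\mca(A)-1\bigr)$. Your explicit justification of the self-cancellation and your use of the zero-sum identity $\wp(A;U)+\wp(U;A)=0$ for the companion inequality are the only presentational differences; the paper states the zero-sum relation separately and leaves $\wp(U;A)\ge 0$ implicit in the proof itself.
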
 

\begin{proof} 
The payoff 

\[ \wp(A; U) = \frac{1}{|A|+|U|} \sum_{j=0} ^M A(x_j) \left( \sum_{i<j} U(x_i) - \sum_{i>j} U(x_i) \right).\]
We note that 

\[ \sum_{i>j} U(x_i) = |U| - U(x_j) - \sum_{i<j} U(x_i) = |U| - \frac{|U|}{M+1} - j \frac{|U|}{M+1},\] 
so the payoff is 

\[   \wp(A; U) = \frac{1}{|A|+|U|} \sum_{j=0} ^M A(x_j) \left(2 j \frac{|U|}{M+1} - |U| + \frac{|U|}{M+1}\right)\]

\[ =  \frac{1}{|A|+|U|} \left( \frac{2M|U|}{M+1} \sum_{j=0} ^M A(x_j) \frac{j}{M} - |U||A| + \frac{|U||A|}{M+1} \right)\] 

\[ =  \frac{1}{|A|+|U|} \left( \frac{2 M |A| |U|}{M+1} \mca(A) - |U| |A| + \frac{|U||A|}{M+1} \right) \] 

\[ = \frac{|A||U|}{(M+1)(|A|+|U|)} \left( 2 M \mca(A) - (M+1) + 1 \right) \] 

\[ = \frac{M |A||U|}{(M+1)(|A|+|U|)} \left( 2 \mca(A) - 1 \right) \leq 0 \textrm{ since } \mca(A) \leq \frac 1 2,\] 
and we see that equality holds precisely when $\mca(A) = \frac 1 2$. 
\end{proof} 

\begin{prop} \label{prop1} Assume that a strategy $A$ in the discrete case has $\mca(A) = 0.5$ and is not symmetric about $0.5$, then there exists at least one $\ell$ with $0 \leq \ell \leq M$ such that 

\[ A(x_\ell) + 2 \sum_{j=0} ^\ell A(x_j) > A(x_{M-\ell}) + 2 \sum_{j=M-\ell + 1} ^M A(x_j).\] 
\end{prop}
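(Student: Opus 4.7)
I will prove the conclusion by an averaging argument. Set
\[
D_\ell := A(x_\ell) + 2\sum_{j=0}^\ell A(x_j) - A(x_{M-\ell}) - 2\sum_{j=M-\ell+1}^M A(x_j),
\]
so the proposition asks for some $\ell$ with $D_\ell > 0$. The plan is to sum $D_\ell$ over all $\ell \in \{0, 1, \ldots, M\}$ and show that the total equals $2|A|$, which is strictly positive. Since a sum of real numbers is positive only when at least one summand is positive, this immediately produces the required index $\ell$.

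The computation of $\sum_{\ell=0}^M D_\ell$ is a routine switch of the order of summation. The single-term contributions cancel, because both $\sum_\ell A(x_\ell)$ and $\sum_\ell A(x_{M-\ell})$ equal $|A|$ (the second is simply the first reindexed by $\ell \mapsto M-\ell$). For the double sums, in $\sum_{\ell=0}^M \sum_{j=0}^\ell A(x_j)$ each value $A(x_j)$ is counted once for every $\ell \in \{j,\ldots,M\}$, that is, $M+1-j$ times; in $\sum_{\ell=0}^M \sum_{j=M-\ell+1}^M A(x_j)$ each value $A(x_j)$ is counted for every $\ell \geq M-j+1$, that is, $j$ times. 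Combining,
\[
\sum_{\ell=0}^M D_\ell = 2\sum_{j=0}^M (M+1-j) A(x_j) - 2\sum_{j=0}^M j\, A(x_j) = 2(M+1)|A| - 4\sum_{j=0}^M j\, A(x_j).
\]

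Finally, the hypothesis $\mca(A) = \frac{1}{2}$ reads $\sum_j (j/M) A(x_j) = |A|/2$, which is equivalent to $\sum_j j\, A(x_j) = (M/2)|A|$. Substituting yields $\sum_\ell D_\ell = 2(M+1)|A| - 2M|A| = 2|A|$, and this is strictly positive since $|A| > 0$ by the definition of a strategy. Hence some $D_\ell > 0$, as required.

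The only real obstacle is the bookkeeping in the index swap, which is mechanical. It is worth remarking that the argument does not actually use the non-symmetry hypothesis: the same conclusion holds for \emph{every} discrete strategy with $\mca = \frac{1}{2}$. I expect the author records non-symmetry because this proposition is deployed in the proof of Proposition \ref{prop2}, where the explicit invader built against $A$ exploits that hypothesis in a different way.
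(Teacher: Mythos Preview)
Your averaging argument is correct: with the inequality exactly as displayed (the inner sum running over $0\le j\le\ell$), one indeed has $\sum_{\ell=0}^M D_\ell = 2|A|>0$, and the conclusion follows without invoking non-symmetry.

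The paper's proof takes a different route. It argues by contrapositive, assuming $D_\ell\le 0$ for every $\ell$, sums over $\ell$, and finds that the two sides are \emph{equal}; from $\sum_\ell D_\ell=0$ together with $D_\ell\le 0$ it deduces $D_\ell=0$ for all $\ell$, and then a short induction on $\ell$ forces $A(x_\ell)=A(x_{M-\ell})$, contradicting non-symmetry. The reason the paper's total comes out to $0$ rather than $2|A|$ is that its computation silently treats the left-hand inner sum as $\sum_{j<\ell}$ rather than $\sum_{j\le\ell}$; and Proposition~\ref{prop2} in fact quotes and applies the result in that $j<\ell$ form. So your remark that non-symmetry is unused is precisely right for the statement as written, but the version actually needed downstream is the sharper $j<\ell$ one, for which the direct averaging argument yields $0$ and the paper's contrapositive-plus-induction becomes necessary.
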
 
\begin{proof} The proof is by contrapositive.  We shall assume that no such $\ell$  exists.  Then, we will show that the strategy, $A$, is symmetric about the competitive ability $0.5$.  So, we assume that for all $\ell$ we have the inequality:

\begin{equation} \label{eq:propsymmetric} A(x_\ell) + 2 \sum_{j=0} ^\ell A(x_j) \leq A(x_{M-\ell}) + 2 \sum_{j=M-\ell + 1} ^M A(x_j). \end{equation} 
 Whenever the sum is empty, we define its value to be equal to zero.  We claim that this inequality must be an equality for all $\ell$.  To see this, we shall sum over all $\ell$.  On the left we obtain 

 \[ \sum_{\ell = 0} ^M 2 \sum_{j<\ell} A(x_j) + \sum_{\ell=0} ^M A(x_\ell)  = \sum_{k=0} ^M [2(M-k) + 1] A(x_k).\] 
 On the right, 

 \[ \sum_{\ell=0} ^M 2 \sum_{j >M-\ell} A(x_j) + \sum_{\ell=0} ^M A(x_{M-\ell}) = \sum_{k=0} ^M (2k+1) A(x_k).\] 
Hence, we have the inequality 

\[ \sum_{k=0} ^M [2(M-k) + 1] A(x_k) \leq \sum_{k=0} ^M (2k+1) A(x_k).\] 
 This reduces to 

 \[ \sum_{k=0} ^M 2M A(x_k) = 2 M |A| \leq \sum_{k=0} ^M 4k A(x_k).\]
 Since the MCA of A is assumed to be equal to $\frac 1 2$,

 \[ \sum_{k=0} ^M 4k A(x_k) = 4 M \frac{|A|}{2} = 2 M |A|.\]
 Hence, both sides are equal.  Therefore, the inequality \eqref{eq:propsymmetric} can never be strict, because once we sum over all $0 \leq \ell \leq M$, the result is an equality.  Hence, we have for all $0 \leq \ell \leq M$, 

 \begin{equation} \label{eq:propsymmeq} 2 \sum_{j < \ell} A(x_j) + A(x_\ell) = 2 \sum_{j > M-\ell} A(x_j) + A(x_{M - \ell}). \end{equation} 
The proof shall be completed by induction on $\ell$.  For the base case, $\ell=0$, so the left side of \eqref{eq:propsymmeq} is $A(x_0)$, and the right side is $A(x_M)$, and these must be equal. Next, we inductively assume that $A(x_j) = A(x_{M-j})$ holds true for all $j=0, \ldots, \ell$ for some $\ell \geq 0$ with $\ell \leq M-1$.  Then, for $\ell+1 \leq M$ we have by \eqref{eq:propsymmeq} 

\[ 2 \sum_{j < \ell+1} A(x_j) + A(x_{\ell+1}) = 2 \sum_{j > M-\ell - 1} A(x_j) + A(x_{M-\ell-1}).\]
Since 

\[ 2 \sum_{j < \ell} A(x_j) + A(x_\ell) = 2 \sum_{j > M-\ell} A(x_j) + A(x_{M-\ell}),\] 

\[ 2 A(x_\ell) + A(x_{\ell+1}) = 2 A(x_{M-\ell}) + A(x_{M-\ell-1}).\] 
By the induction assumption, $A(x_\ell) = A(x_{M-\ell})$, so we obtain $A(x_{\ell+1}) = A(x_{M-\ell-1})$.
This completes the proof by induction. 
\end{proof} 

We use the preceding proposition to prove that strategies that are not symmetric about $\frac 1 2$ can be defeated.  First we note that for all strategies, there is a zero-sum dynamic, namely, 

\[ \wp(A; B) + \wp(B; A) = 0.\] 

\begin{prop} \label{prop2} Let $A$ be a strategy that has $\mca(A) = \frac 1 2$ that is not symmetric with respect to $\frac 1 2$.  Then there exists a strategy $B$ that has $\mca(B) = \frac 1 2$ for which 

\[ \wp(A; B) < 0, \quad \wp(B; A) > 0.\] 
\end{prop}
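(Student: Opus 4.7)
My strategy is to exhibit the invading strategy $B$ explicitly. By the zero-sum identity $\wp(A; B) + \wp(B; A) = 0$ noted just before the proposition, it suffices to produce a single $B$ with $\mca(B) = 1/2$ and $\wp(B; A) > 0$; the inequality $\wp(A; B) < 0$ follows automatically. The natural candidate is a two-point mass placed symmetrically about $1/2$: using the index $\ell \in \{0, \ldots, M\}$ supplied by Proposition~\ref{prop1}, I set $B(x_\ell) = B(x_{M-\ell}) = 1$ and $B(x_j) = 0$ for $j \notin \{\ell, M-\ell\}$. Then $|B| = 2 > 0$, and because $x_\ell + x_{M-\ell} = 1$, one reads off $\mca(B) = (x_\ell + x_{M-\ell})/2 = 1/2$, so $B$ is a valid strategy in the sense of Definition~\ref{def:strategy_d}.

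For this $B$ only $j = \ell$ and $j = M-\ell$ contribute to the payoff, so with the shorthand $P(k) := \sum_{i<k} A(x_i) - \sum_{i>k} A(x_i)$ the payoff formula reduces to
$$\wp(B; A) \;=\; \frac{P(\ell) + P(M-\ell)}{|A| + 2}.$$
The task thus reduces to showing that the $\ell$ produced by Proposition~\ref{prop1} satisfies $Q(\ell) := P(\ell) + P(M-\ell) > 0$. Rewriting $P(k) = 2\sum_{i<k} A(x_i) + A(x_k) - |A|$ for $k = \ell$ and $k = M-\ell$ and using $\sum_{i<M-\ell} A(x_i) = |A| - A(x_{M-\ell}) - \sum_{i>M-\ell} A(x_i)$, the $|A|$ terms cancel and one finds
$$Q(\ell) \;=\; A(x_\ell) - A(x_{M-\ell}) + 2\sum_{i<\ell} A(x_i) - 2\sum_{i>M-\ell} A(x_i),$$
which is precisely the left-minus-right side of the inequality furnished by Proposition~\ref{prop1}. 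So the very $\ell$ from Proposition~\ref{prop1} gives $Q(\ell) > 0$, hence $\wp(B; A) > 0$, and the zero-sum dynamic yields $\wp(A; B) < 0$ as required.

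The main obstacle, such as it is, is the algebraic bookkeeping that identifies $Q(\ell)$ with the Proposition~\ref{prop1} inequality — once the identification is made, both payoff signs drop out by a single substitution. A minor edge case deserves mention: when $M$ is even and $\ell = M/2$, the two support points coincide and $B$ collapses to a single atom of weight $2$ at $x_{M/2}$; it still satisfies $\mca(B) = 1/2$, and the payoff is $2 P(M/2)/(|A|+2)$, positive by the same Proposition~\ref{prop1} input. The endpoints $\ell = 0$ and $\ell = M$ require no special treatment under the empty-sum convention. Throughout, the construction is entirely explicit: $B$ is built directly from Proposition~\ref{prop1}'s index $\ell$, with no further existence step needed.
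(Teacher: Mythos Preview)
Your argument is essentially the paper's own proof: the same two-point symmetric invader supported at $x_\ell$ and $x_{M-\ell}$, the same reduction of $\wp(B;A)$ to the quantity $A(x_\ell)+2\sum_{i<\ell}A(x_i)-A(x_{M-\ell})-2\sum_{i>M-\ell}A(x_i)$, and the same appeal to Proposition~\ref{prop1}. The only cosmetic difference is the normalization of $B$ (you use mass $1$ at each point, the paper uses $|A|/2$), which is irrelevant to the sign of the payoff; your edge-case remark about $\ell=M/2$ is harmless but in fact unnecessary, since the Proposition~\ref{prop1} inequality (in the form used in the paper's proof of Proposition~\ref{prop2}) is vacuous at $\ell=M-\ell$.
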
 

\begin{proof}
By the preceding proposition, since A is not symmetric with respect to $0.5$, we have proven that there is an $\ell$ such that

\begin{equation} \label{eq:prop2_1} 2 \sum_{j < \ell} A(x_j) + A(x_\ell) > 2 \sum_{j > M-\ell} A(x_j) + A(x_{M-\ell}).  \end{equation} 
Whenever the sum is empty, define its value to be zero.  We define 

\[ B(x_j) := \begin{cases} \frac{|A|}{2} & j \in \{\ell, M-\ell \} \\ 0 & \textrm{ otherwise.}\} \end{cases} \]   
The mean competitive ability of $B$, 

\[  \mca(B) = \frac{1}{|A|} \left( \frac{\ell}{M} \frac{|A|}{2} + \frac{M-\ell}{M} \frac{|A|}{2} \right) = \frac 1 2.\] 
We compute the payoff

\[ \wp(B; A) = \frac{1}{2|A|} \left( B(x_\ell) \left( \sum_{j < \ell} A(x_j) - \sum_{j>\ell} A(x_j) \right) + B(x_{M-\ell}) \left(  \sum_{j < M- \ell} A(x_j) - \sum_{j>M-\ell} A(x_j) \right) \right)\]

\[ = \frac{1}{4}  \left( \left( \sum_{j < \ell} A(x_j) - \sum_{j>\ell} A(x_j) \right) + \left(  \sum_{j < M- \ell} A(x_j) - \sum_{j>M-\ell} A(x_j) \right) \right).\]
We note that 

\[ \sum_{j>\ell} A(x_j) = |A| - A(x_\ell) - \sum_{j < \ell} A(x_j), \quad \sum_{j < M-\ell} A(x_j) = |A| - A(x_{M-\ell}) - \sum_{j > M-\ell} A(x_j).\] 
This allows us to re-write $\wp(B; A)$  

\[ = \frac 1 4  \left( \sum_{j < \ell} A(x_j) - \left(  |A| - A(x_\ell) - \sum_{j < \ell} A(x_j) \right) + \left(   |A| - A(x_{M-\ell}) - \sum_{j > M-\ell} A(x_j) - \sum_{j>M-\ell} A(x_j) \right) \right) \] 

\[ = \frac 1 4 \left( 2 \sum_{j < \ell} A(x_j) + A(x_\ell) - \left( 2 \sum_{j > M-\ell} A(x_j) + A(x_{M-\ell}) \right)  \right) > 0,\]
 with the final inequality following from \eqref{eq:prop2_1}.
\end{proof} 

\begin{prop} \label{prop3}  Assume that $(A, B)$ is an equilibrium strategy.  Then 

\[ \wp(A; B) = \wp(B; A) = 0. \] 
Moreover, we have for any strategy $C$, 

\begin{equation} \label{eq:eq_strategy_2} \wp(*; C) \geq 0, \quad * = A, B. \end{equation} 
In case $M$ is odd, all equilibrium strategies are uniform.  In case $M$ is even, all equilibrium strategies have $\mca$ equal to $\frac 1 2$ and further satisfy $A(x_{2j}) = A(x_0)$, $A(x_{2j+1}) = A(x_1)$ for all $j=0, 1, \ldots, \frac M 2$.  Furthermore equality holds in  \eqref{eq:eq_strategy_2} if and only if $\mca(C) = \frac 1 2$. 
\end{prop}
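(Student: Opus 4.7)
The plan is to bootstrap from the tools already in hand (Propositions \ref{prop0} and \ref{prop2}) using the zero-sum identity $\wp(A;B) + \wp(B;A) = 0$. First, I would pin down the payoff values and the MCA. Taking $U$ uniform and applying the equilibrium hypothesis to both players gives $\wp(A;B) \geq \wp(U;B) \geq 0$ and $\wp(B;A) \geq \wp(U;A) \geq 0$ by Proposition \ref{prop0}; adding these and using the zero-sum identity forces $\wp(A;B) = \wp(B;A) = 0$, and the equality case of Proposition \ref{prop0} yields $\mca(A) = \mca(B) = \frac 1 2$.

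Next I would establish symmetry of $A$ about $\frac 1 2$ (and similarly for $B$) by contradiction: if $A$ were not symmetric, then Proposition \ref{prop2} supplies a strategy $B^{*}$ with $\mca(B^{*}) = \frac 1 2$ and $\wp(B^{*};A) > 0$, contradicting $0 = \wp(B;A) \geq \wp(B^{*};A)$.

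The main obstacle is promoting symmetry plus $\mca = \frac 1 2$ to the full alternating structure. I would probe $A$ with a one-parameter family of two-point invaders: for each pair $\ell_1 < \frac M 2 < \ell_2$ I take $B$ supported on $\{x_{\ell_1}, x_{\ell_2}\}$ with masses $B(x_{\ell_1}) = 2\ell_2 - M$ and $B(x_{\ell_2}) = M - 2\ell_1$; a short computation yields $\mca(B) = \frac 1 2$. Writing $S_j := \sum_{i<j} A(x_i) - \sum_{i>j} A(x_i)$ and using the antisymmetry $S_{M-j} = -S_j$ inherited from symmetry of $A$, the equilibrium bound $\wp(B;A) \leq 0$ rearranges to $S_j/(x_j - \frac 1 2) \leq S_k/(x_k - \frac 1 2)$ for arbitrary $j, k > \frac M 2$. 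Exchanging the roles of $j$ and $k$ collapses this into equality, so there is a single $\lambda \geq 0$ with $S_j = \lambda |A| (x_j - \frac 1 2)$ for every $j$ with $x_j \neq \frac 1 2$. Differencing and using $S_j - S_{j-1} = A(x_{j-1}) + A(x_j)$ yields the recurrence $A(x_{j-1}) + A(x_j) = \lambda |A|/M$, whose solution is alternating: $A(x_{2j}) = A(x_0)$ and $A(x_{2j+1}) = A(x_1)$. When $M$ is odd the even- and odd-indexed terms number equally, so the normalization $\sum_j A(x_j) = |A|$ pins $\lambda$ down and forces $A(x_0) = A(x_1)$, hence $A$ is uniform; when $M$ is even the normalization is automatically satisfied for every admissible $\lambda$, leaving both values free.

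Finally, for the inequality $\wp(A;C) \geq 0$ for arbitrary $C$ I would argue directly from the derived form. For $M$ odd, $A$ is uniform and Proposition \ref{prop0} applies immediately. For $M$ even, decompose $A(x_j) = c + (-1)^j d$ with $c = (A(x_0) + A(x_1))/2 > 0$ and expand $\wp(A;C) = |C|(|A|+|C|)^{-1}\sum_j A(x_j) s^C_j$, where $s^C_j := F^C_{j-1} + F^C_j - 1$ and $F^C_j := |C|^{-1}\sum_{i \leq j} C(x_i)$. Direct calculation gives $\sum_j s^C_j = M(1 - 2\mca(C))$, while $\sum_j (-1)^j s^C_j$ telescopes to zero for $M$ even upon using $F^C_{-1} = 0$ and $F^C_M = 1$. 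Therefore $\wp(A;C) = cM|C|(|A|+|C|)^{-1}(1 - 2\mca(C)) \geq 0$, with equality precisely when $\mca(C) = \frac 1 2$. The statement for $B$ follows by the symmetric argument.
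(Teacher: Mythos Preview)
Your argument is essentially correct and takes a genuinely different route from the paper, but there is one slip you should repair. You write that for $M$ odd ``the normalization $\sum_j A(x_j) = |A|$ pins $\lambda$ down and forces $A(x_0) = A(x_1)$.'' The normalization only yields $\tfrac{M+1}{2}\bigl(A(x_0)+A(x_1)\bigr)=|A|$, which merely reproduces $A(x_0)+A(x_1)=\lambda|A|/M$ and fixes $\lambda$; it does \emph{not} separate the two values. The missing ingredient is the symmetry you already established: $A(x_0)=A(x_M)$, and since $M$ is odd the index $M$ has the opposite parity from $0$, so $A(x_M)=A(x_1)$ by your alternating recurrence, whence $A(x_0)=A(x_1)$ and $A$ is uniform. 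With this correction the argument goes through.

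On the comparison: the paper obtains $\wp(A;B)=\wp(B;A)=0$ by replacing $A$ with $B$ itself (self-competition has payoff zero) and gets $\wp(\ast;C)\ge 0$ immediately from the equilibrium inequality; it then rules out non-uniform symmetric $A$ by a case analysis, constructing in each case a bespoke two-point invader (three-point when $M$ is even) that exploits the first location where $A$ departs from the required pattern. Your approach instead probes $A$ with the entire family of mean-$\tfrac12$ two-point invaders, converts the resulting inequalities into the single linear relation $S_j=\lambda|A|(x_j-\tfrac12)$, and reads off the alternating structure from the recurrence $S_j-S_{j-1}=A(x_{j-1})+A(x_j)$. This is a cleaner, more uniform derivation: one family of test strategies handles both parities of $M$ simultaneously and replaces the paper's separate case constructions by a short linear-algebra step. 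The paper's explicit invaders have the virtue of being concrete counterstrategies (useful for the biological narrative), whereas your route makes the rigidity of the equilibrium shape transparent. Your final verification of $\wp(A;C)\ge 0$ via the decomposition $A(x_j)=c+(-1)^j d$ and the telescoping identity $\sum_j(-1)^j s^C_j=0$ is likewise more compact than the paper's direct expansion.
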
 

\begin{proof} 
To prove the first statement, we note that if 

\[ \wp(A; B) < 0 \implies \wp(B; B) = 0 > \wp(A; B),\] 
contradicting the definition of equilibrium strategy.  Thus $\wp(A; B) \geq 0$.  The same argument shows that $\wp(A; B) \geq 0$, so by the zero-sum dynamic, $\wp(A; B) = \wp(B; A) = 0$.  To prove the second statement, assume that there is a strategy $C$ such that 

\[ \wp(A; C) < 0.\] 
Then, by the zero-sum dynamic

\[ \wp(C; A) > 0 = \wp(B; A),\] 
contradicting the definition of $B$ as an equilibrium strategy.  The same argument proves that \eqref{eq:eq_strategy_2} holds for $B$ as well.  By Proposition \ref{prop0}, any strategy with $\mca < \frac 1 2$ is not an equilibrium strategy.  By Proposition \ref{prop2}, any strategy that is not symmetric with respect to $\frac 1 2$ is not an equilibrium strategy.  By Proposition \ref{prop0}, strategies comprised of uniform strategies $U_1$ and $U_2$ satisfy 

\[ \wp(U_1; U_2) = 0 = \wp(U_2; U_1), \quad \wp(C; U_i) = - \wp(U_i; C) \leq 0,\] 
for any strategy $C$, with equality if and only if $\mca(C) = \frac 1 2$.  Consequently, any such $(U_1, U_2)$ is an equilibrium strategy. 
Let us now see that in case $M$ is odd, all equilibrium strategies are comprised of uniform strategies.  For this aim, it suffices to consider strategies that have $\mca=\frac 1 2$ and are symmetric about $\frac 1 2$.  We will show that such a strategy $A$ that is not uniform cannot be an equilibrium strategy.  Let $\ell$ be the smallest integer such that $A(x_{\ell+1}) = \ldots = A(x_{M-\ell-1})$.  Since we have assumed that $A$ is symmetric,

\[ A \left( x_{\frac{M-1}{2}} \right) = A\left( x_{\frac{M+1}{2}} \right).\]
Consequently $\ell < \frac{M-1}{2}$, and by the assumption that $A$ is not uniform, there exists such an $\ell \geq 0$.  There are only two cases to consider:

\[ \textrm{Case 1:  } A(x_\ell) > A(x_{\ell+1}), \quad \textrm{Case 2:  } A(x_{\ell}) < A(x_{\ell+1}). \] 
Assume first that we are in the first case.  Let 

\[ B(x_j) :=  \begin{cases} |A| \frac{M-2\ell}{M-2\ell+1} & j=\frac{M-1}{2}, \\ |A| \frac{1}{M-2\ell+1} & j=M-\ell, \\ 0 & \textrm{ otherwise.} \end{cases}  \] 
Then we compute 

\[ \mca(B) = \frac{M-1}{2M} \frac{M-2\ell}{M-2\ell+1} + \frac{M-\ell}{M} \frac{1}{M-2\ell+1} = \frac{(M-1)(M-2\ell) + 2(M-\ell)}{2M(M-2\ell+1)} = \frac 1 2.\] 
We compute the payoff $\wp(B; A) =$ 

\[ \frac{1}{2|A|}  |A| \frac{M-2\ell}{M-2\ell+1} \left( \sum_{i < \frac{M-1}{2}} A(x_i) - \sum_{i>\frac{M-1}{2}} A(x_i) \right) \] 

\[ + \frac{1}{2|A|} |A| \frac{1}{M-2\ell+1}  \left( \sum_{i < M-\ell} A(x_i) - \sum_{i>M-\ell} A(x_i) \right). \] 

\[ = \frac{1}{2(M-2\ell+1)} \left((M-2\ell) \left( \sum_{i < \frac{M-1}{2}} A(x_i) - \sum_{i>\frac{M-1}{2}} A(x_i) \right) +  \left( \sum_{i < M-\ell} A(x_i) - \sum_{i>M-\ell} A(x_i) \right) \right). \] 
By the symmetry assumption, this is 

\[ =  \frac{1}{2(M-2\ell+1)} \left( - (M-2\ell)A(x_{\frac{M+1}{2}}) +  \left( \sum_{i < M-\ell} A(x_i) - \sum_{i>M-\ell} A(x_i) \right) \right). \] 
By the definition of $\ell$, this is 

\[ =  \frac{1}{2(M-2\ell+1)} \left( - (M-2\ell)A(x_{\ell+1}) +  \left( \sum_{i < M-\ell} A(x_i) - \sum_{i>M-\ell} A(x_i) \right) \right). \] 
Using the symmetry of $A$ and the definition of $\ell$ this is 

\[ =  \frac{1}{2(M-2\ell+1)} \left( - (M-2\ell)A(x_{\ell+1}) +  (M-2\ell-1) A(x_{\ell+1}) + A(x_\ell) \right)\] 

\[=  \frac{1}{2(M-2\ell+1)} (A(x_\ell) - A(x_{\ell+1})) >0,\]
with the last inequality following from the fact that we are in Case 1.   

Next we assume that we are in Case 2, so we have $A(x_\ell) < A(x_{\ell+1})$.  We define 

\[ B(x_j) := \begin{cases} |A| \frac{1}{M-2\ell+1} & j=\ell, \\ |A| \frac{M-2\ell}{M-2\ell+1} & j= \frac{M+1}{2}, \\ 0 & \textrm{ otherwise.} \end{cases} \] 
Then we compute 

\[ \mca(B) = \frac \ell M \frac{1}{M-2\ell + 1} + \frac{M+1}{2M} \frac{M-2\ell}{M-2\ell+1} = \frac{2 \ell + (M+1)(M-2\ell)}{2M(M-2\ell+1)} = \frac 1 2. \] 
By definition, $\wp(B; A) =$

\[ \frac{1}{2|A|}  |A| \frac{1}{M-2\ell+1} \left( \sum_{j<\ell} A(x_j) - \sum_{j>\ell} A(x_j) \right) \] 

\[ +  \frac{1}{2|A|}  |A| \frac{M-2\ell}{M-2\ell + 1} \left( \sum_{j<\frac{M+1}{2}} A(x_j) - \sum_{j > \frac{M+1}{2}} A(x_j) \right).\] 

\[ = \frac{1}{2(M-2\ell + 1)}\left( - (M-2\ell-1) A(x_{\ell+1}) - A(x_\ell) + (M-2\ell) A(x_{\frac{M-1}{2}}) \right)\] 

\[ = \frac{1}{2(M-2\ell + 1)}\left(  -(M-2\ell-1) A(x_{\ell+1}) - A(x_\ell) + (M-2\ell) A(x_{\ell+1}) \right)\] 

\[ = \frac{1}{2(M-2\ell+1)} (A(x_{\ell+1}) - A(x_\ell)) >0.\] 
Above we have used the symmetry of $A$ and the definition of $\ell$, and the fact that we are in Case 2.  This completes the proof that in case $M$ is odd, all equilibrium strategies are comprised of uniform strategies.  

Let us now assume that $M$ is even.  In this case we already have demonstrated that equilibrium strategies must have $\mca = \frac 1 2$ and must be symmetric.  The last part of the proof is to demonstrate that in this case, equilibrium strategies must further satisfy the condition $A(x_{2j}) = A(x_0)$ and $A(x_{2j+1}) = A(x_1)$ for all $1 \leq j \leq \frac M 2$.  This will proceed in two parts.  First we show that a strategy that has $\mca = \frac 1 2$, is symmetric, but does not satisfy this condition cannot be an equilibrium strategy.  We therefore assume that there is some $2 \leq \ell \leq \frac M 2$ such that $A(x_{\ell -2}) \neq A(x_{\ell})$.  There are two cases: 

\[ \textrm{Case 1:  }  A(x_{\ell-2}) < A(x_\ell), \quad \textrm{ Case 2:  } A(x_{\ell-2}) > A(x_\ell). \] 
Assume we are in the first case.  We define 

\[ B(x_j) := \begin{cases} \frac{|A|}{2} & j=M-\ell + 1, \\ \frac{|A|}{4} & j=\ell, \\ \frac{|A|}{4} & j = \ell - 2. \end{cases} \] 
Then 

\[ \mca(B) = \frac{1}{|A|} \left( \frac{|A|}{2} \frac{M-\ell+1}{M} + \frac{|A|}{4} \frac{\ell}{M} + \frac{|A|}{4} \frac{\ell-2}{M}\right)\] 

\[= \frac{2M-2\ell+2+\ell+\ell-2}{4M} = \frac 1 2.\]   
The payoff $\wp(B; A) = $

\[ \frac{1}{2|A|} \frac{|A|}{4} \left( \sum_{j<\ell-2} A(x_j) - \sum_{j>\ell-2} A(x_j) + \sum_{j<\ell} A(x_j) - \sum_{j>\ell} A(x_j) \right)\] 

\[+\frac{1}{2|A|} \frac{|A|}{2} \left( \sum_{j<M-\ell+1} A(x_j) -  \sum_{j>M-\ell+1} A(x_j) \right)\] 

\[ = \frac{1}{8} \left( \sum_{j < \ell - 2} A(x_j) - A(x_{\ell-1}) - A(x_\ell) - \sum_{\ell<j\leq M-\ell} A(x_j) - A(x_{M-\ell+1}) - A(x_{M-\ell+2})) \right)\] 

\[ + \frac 1 8 \left(  - \sum_{j>M-\ell+2} A(x_j) + \sum_{j < \ell} A(x_j) - \sum_{\ell < j \leq M-\ell} A(x_j) - \sum_{j>M-\ell} A(x_j) \right) \] 

\[ + \frac 1 4 \left( \sum_{j<\ell-1} A(x_j) + A(x_{\ell-1}) + A(x_\ell) + \sum_{\ell < j \leq M-\ell} A(x_j) - \sum_{j > M-\ell+1} A(x_j) \right).\] 
Note that by symmetry, $A(x_j) = A(x_{M-j})$.  Therefore this is 

\[ \frac 1 8 \left( - 2 A(x_{\ell-1}) - A(x_\ell) - A(x_{\ell-2}) - \sum_{\ell < j \leq M-\ell} A(x_j) - \sum_{\ell < j \leq M-\ell} A(x_j)\right) \] 

\[ + \frac 1 4 \left( A(x_{\ell-1}) + A(x_\ell) + \sum_{\ell < j \leq M-\ell} A(x_j) \right) = \frac 1 8\left( A(x_\ell) - A(x_{\ell-2})\right) >0. \] 
The last inequality follows because we are in Case 1.  Now let us assume that we are in Case 2, so $A(x_{\ell-2}) > A(x_\ell)$.  In this case we define 

\[ B(x_j) = \begin{cases} \frac{|A|}{2} & j=\ell-1, \\ \frac{|A|}{4} & j=M-\ell, \\ \frac{|A|}{4} & j=M-\ell+2. \end{cases} \] 
Then, 

\[ \mca(B) = \frac{1}{|A|} \left( \frac{|A|}{2} \frac{\ell-1}{M} + \frac{|A|}{4} \frac{M-\ell}{M} + \frac{|A|}{4} \frac{M-\ell+2}{<}\right)\] 

\[ = \frac{2\ell-2+M-\ell + M-\ell+2}{4 M} = \frac 1 2.\] 
We compute that the payoff $\wp(B; A)=$

\[ \frac{1}{2|A|} \frac{|A|}{2} \left( \sum_{j < \ell-1} A(x_j) - \sum_{\ell \leq j < M-\ell} A(x_j) - A(x_{M-\ell}) - A(x_{M-\ell+1}) - \sum_{j>M-\ell + 1} A(x_j) \right) \] 

\[ + \frac{1}{2|A|} \frac{|A|}{4} \left( \sum_{j < \ell} A(x_j) + \sum_{\ell \leq j < M-\ell} A(x_j) - \sum_{j > M-\ell} A(x_j) \right) \] 

\[ + \frac{1}{2|A|} \frac{|A|}{4} \left( \sum_{j < \ell-2} A(x_j) + A(x_{\ell-2}) + A(x_{\ell-1}) + \sum_{\ell \leq j < M-\ell} A(x_j) + A(x_{M-\ell}) + A(x_{M-\ell+1})\right)\]

\[ - \frac{1}{2|A|} \frac{|A|}{4} \sum_{j>M-\ell+2} A(x_j).\]   

By the symmetry, $A(x_j) = A(x_{M-j})$ and so this is 

\[ \frac 1 4 \left( - \sum_{\ell \leq j < M-\ell} A(x_j) - A(x_{\ell}) - A(x_{\ell-1}) \right)\] 

\[ + \frac 1 8 \left( \sum_{\ell \leq j < M-\ell} A(x_j) + A(x_{\ell-2}) + A(x_\ell) + 2 A(x_{\ell-1}) + \sum_{\ell \leq j < M-\ell} A(x_j) \right)\] 

\[ = \frac 1 8 \left( A(x_{\ell-2}) - A(x_\ell) \right) > 0.\] 
The last inequality follows because we are in Case 2.  

To complete the proof, we will demonstrate that $\wp(A;B) \geq 0$ for all strategies $B$ for any strategy $A$ that satisfies: 
\begin{enumerate} 
\item $\mca(A) = \frac 1 2$;
\item $A(x_j) = A(x_{M-j})$ for all $j$;
\item for all $2 \leq j \leq \frac M 2$, $A(x_j) = A(x_{j-2})$.
\end{enumerate} 
Moreover, we will prove that $\wp(A;B) = 0$ if and only if $\mca(B) = \frac 1 2$.  Consequently, in case $M$ is even, all equilibrium strategies are comprised of strategies of this type.  Assuming that $A$ satisfies the conditions above, there exist non-negative numbers, $a$ and $b$ such that 

\[ A(x_{2j}) = a, \quad A(x_{2j+1}) = b, \quad \textrm{ for all } j=0, \ldots, \frac M 2.\] 
We shall compute the payoff for an arbitrary strategy $B$ in competition against such a strategy $A$, 

\[ \wp(B; A) = \frac{1}{|A| + |B|} \sum_{i=0} ^M B(x_i) \left( \sum_{j=0} ^{i-1} A(x_j) - \sum_{j=i+1} ^M A(x_j) \right).\] 
To exploit the symmetry of $A$, we write this as 

\[ \frac{1}{|A| + |B|} \sum_{i=0} ^{M/2} B(x_i) \left( \sum_{j=0} ^{i-1} A(x_j) - \sum_{j=i+1} ^M A(x_j) \right) \] 

\[ + \frac{1}{|A| + |B|} \sum_{i=\frac M 2 + 1} ^ M B(x_i) \left(\sum_{j=0} ^{i-1} A(x_j) - \sum_{j=i+1} ^M A(x_j) \right)\] 

\[ = - \frac{1}{|A| + |B|} \sum_{i=0} ^{M/2} B(x_i) \sum_{j=i+1} ^{M-i} A(x_j) + \frac{1}{|A| + |B|} \sum_{i=\frac M 2 + 1} ^ M B(x_i) \sum_{j=M-i} ^{i-1} A(x_j),\]
having used the symmetry of $A$ above.  The first sum over $j$, $\sum_{j=i+1} ^{M-i} A(x_j)$, has $M-2i$ summands, half of which are equal to $a$, and half of which are equal to $b$.  Similarly, the second sum over $j$, $\sum_{j=M-i} ^{i-1} A(x_j)$, has $2i-M$ summands, half of which are equal to $a$, and half of which are equal to $b$.  Therefore we obtain:  

\[ \wp(B; A) = \frac{1}{|A| + |B|} \left[ - \sum_{i=0} ^{M/2} B(x_i) \left( \frac M 2 - i \right) (a+b) + \sum_{i=\frac M 2 + 1} ^M B(x_i) \left( i - \frac M 2 \right) (a+b)\right] \] 

\[ = \frac{1}{|A|+|B|} (a+b) \sum_{i=0} ^M B(x_i) \left( i - \frac M 2 \right) = \frac{M (a+b)}{|A|+|B|} \left( - \frac{|B|}{2} + \sum_{i=0} ^M \frac i M B(x_i) \right)\]

\[ =  \frac{M (a+b)}{|A|+|B|} \left( - \frac{|B|}{2} + |B| \mca(B) \right) \leq 0,\] 
with equality if and only if $\mca(B) = \frac 1 2$.  Consequently we have proven that for such an $A$, we have 

\[ \wp(B;A) \leq 0 \textrm{ for any $B$ with equality if and only if $\mca(B) = \frac 1 2$}\] 
that immediately implies 

\[ \wp(A; B) \geq 0 \textrm{ for any $B$ with equality if and only if $\mca(B) = \frac 1 2$}.\] 
We therefore obtain that in case $M$ is even, equilibrium strategies are precisely those that are comprised of $A$ that satisfy these three conditions.  
\end{proof} 

We will now use the results obtained here to determine all equilibrium strategies in the discrete model for arbitrary numbers of competing species.  

\subsubsection*{Proof of Theorem 1 for the discrete model} \label{si:discrete_t1} 
\begin{proof}
Assume that $\{A_k\}_{k=1} ^n$ is a set of strategies that satisfies the assumptions of the theorem that characterize the equilibrium strategies.   We begin by proving that these characteristics do ensure that this set of strategies is an equilibrium strategy.  Consider any strategy $B$ as in Definition 1.  Then if we change strategy $A_1$ to $B$, the resulting payoff $\wp (B; A_2, \ldots, A_n) = $

\[  \frac{1}{|B| + \sum_{k \geq 2} |A_k|} \sum_{j=0} ^M B(x_j) \left[ \sum_{i < j} \left( B(x_i) + \sum_{k \geq 2} A_k (x_i) \right) - \sum_{i > j} \left(B(x_i) + \sum_{k \geq 2} A_k (x_i) \right) \right].\] 
However, by the assumption on $A_k$ in the theorem and the zero-sum dynamic, we have 

\[\wp(A_k; B) \geq 0 \implies \wp(B; A_k) \leq 0 \implies \sum_{j=0} ^M B(x_j) \left[ \sum_{i<j} A_k (x_i) - \sum_{i>j} A_k (x_i) \right] \leq 0,\] 
for all $k=2, \ldots, n$.  
Therefore, since the internal competition within $B$ does not affect the payoff, we have 

\begin{equation} \label{eq:proof1} \sum_{j=0} ^M B(x_j) \left[ \sum_{i<j} \sum_{k=2} ^n A_k (x_i) - \sum_{i>j} \sum_{k=2} ^n A_k (x_i) \right] \leq 0 \implies \wp (B; A_2, \ldots, A_n) \leq 0. \end{equation} 

By the assumptions of the theorem we have for all pairs $A_j$ and $A_k$ both 

\[ \wp(A_k; A_j) \geq 0, \quad \wp(A_j; A_k) \geq 0. \] 
Due to the zero sum dynamic

\[ \wp(A_k; A_j) = - \wp(A_j; A_k) \implies \wp(A_k; A_j) = 0 \quad \forall j, k \in \{1, \ldots, n\}.\] 
Consequently, we have

\[ \sum_{j=0} ^M A_k (x_j) \left[ \sum_{i<j} A_\ell(x_i) - \sum_{i>j} A_\ell(x_i) \right] = 0, \quad \forall k, \ell \in \{1, \ldots, n\},\] 
and therefore 

\[ \wp(A_k; A_1, \ldots, A_{k-1}, A_{k+1}, A_n) = 0 \quad \forall k=1, \ldots, n.\] 
By \eqref{eq:proof1}, we therefore have $\wp(B; A_2, \ldots, A_n) \leq \wp(A_1; A_2, \ldots, A_n) = 0$.   The same argument applies to $A_k$ for all $k$, namely, for any $B$ as in, Definition 1, 

\[ \wp(B; A_1, \ldots, A_{k-1}, A_{k+1}, \ldots A_n) \leq 0 = \wp(A_k; A_1, \ldots, A_{k-1}, A_{k+1}, \ldots, A_n).\] 
It follows that the set of strategies $\{A_k\}_{k=1} ^n$ is an equilibrium strategy.  

To prove the converse, we begin by assuming that $\{A_k\}_{k=1} ^n$ is an equilibrium strategy.  For the sake of contradiction, let us see what would happen if 

\[ \wp(A_1; A_2 \ldots, A_n) < 0 \implies \sum_{j=0} ^M A_1 (x_j) \left[ \sum_{i<j} \sum_{k=2} ^n A_k (x_i) - \sum_{i<j} \sum_{k=2} ^n A_k (x_i) \right] < 0.\] 
Defining a new strategy $B$ so that 

\[ B(x_j) := \sum_{k=2} ^n A_k (x_j) \implies \mca(B) \leq \frac 1 2,\]
and 

\[ \sum_{j=0} ^M B(x_j) \left[ \sum_{i<j} \sum_{k=2} ^n A_k (x_i) - \sum_{i>j} \sum_{k=2} ^n A_k (x_i) \right] = 0 \implies \wp(B; A_2, \ldots, A_n) = 0 > \wp (A_1; \ldots, A_n),\] 
contradicting the definition of equilibrium strategy.  We therefore have that $\wp(A_1; A_2, \ldots, A_n) = 0$, and the same argument shows that $\wp (A_k; A_1, \ldots, A_{k-1}, A_{k+1}, \ldots A_n ) = 0$ for all $k=1, \ldots, n$.  By the definition of equilibrium strategy, we must have that for any strategy $B$, 

\[ \wp (B; A_2, \ldots, A_n) \leq \wp (A_1; A_2 \ldots, A_n) = 0 \implies \sum_{j=0} ^M B(x_j) \left[ \sum_{i<j} \sum_{k=2} ^n A_k (x_i) - \sum_{i>j} \sum_{k=2} ^n A_k (x_i) \right] \leq 0.\] 
We define a new strategy 

 \[ B_2 (x_j) := \sum_{k=2} ^n A_k (x_j) \implies \mca(B_2) \leq \frac 1 2.\] 
Then the payoff to any strategy $B$ in competition with $B_2$ satisfies 

\[ \wp (B; B_2) \leq 0.\] 
By the zero sum dynamic, we therefore have 

\[ \wp (B_2; B) \geq 0.\] 
By the definition of equilibrium strategy, since $\wp (A_k; A_1, \ldots , A_{k-1}, A_{k+1}, \ldots, A_n) = 0$ for all $k$, we define strategies 

\[ B_k (x_j) := \sum_{\ell \neq k} A_\ell (x_j),\] 
that all satisfy 

\[ \wp (B_k; B) \geq 0\] 
for all strategies $B$ as in Definition 1.  We therefore have that 

\[ \wp(B_k; B) \geq 0, \] 
with equality if and only if $\mca(B) = \frac 1 2$.  Consequently, we obtain that for all $B$ with $\mca(B) = \frac 1 2$, 

\[ \wp(B; B_k) = 0 \implies \sum_{k=1} ^n \wp(B; B_k) = 0 \implies (n-1) \sum_{j=0} ^M B(x_j) \left[ \sum_{i<j} \sum_{k=1} ^n A_k (x_i) - \sum_{i>j} \sum_{k=1} ^n A_k (x_i) \right] = 0,\] 
as well as 

\[ \wp(B; B_1) = 0 \implies \sum_{j=0} ^M B(x_j) \left[ \sum_{i<j} \sum_{k=2} ^n A_k (x_i) - \sum_{i>j} \sum_{k=2} ^n A_k (x_i) \right] = 0.\] 
We therefore obtain that for all $B$ with $\mca(B) = \frac 1 2$, 

\[ \sum_{j=0} ^M B(x_j) \left[ \sum_{i<j} A_1 (x_i) - \sum_{i > j} A_1 (x_i) \right] = 0 \implies \wp(B; A_1) = 0.\] 
It follows that in fact for all strategies $B$ as in Definition 1 we have 

\[ \wp(A_1; B) \geq 0, \] 
with equality if and only if $\mca(B) = \frac 1 2$.  The same argument shows that each $A_k$ satisfies 

\[ \wp(A_k; B) \geq 0\] 
for all $B$ as in Definition 1. 
\end{proof} 

\subsection*{Mathematical proofs for the continuous model} \label{si:continuous} 
Here we determine all equilibrium strategies for the continuous model when two strategies compete.  We will complete the proof of the main theorem for \em both the continuous and discrete models simultaneously \em in the following section.  
\begin{prop} \label{prop1:cont}  In the continuous model, a pair of strategies $(f_1, f_2)$ is an equilibrium strategy if and only if both $f_1$ and $f_2$ are positive constants.  Moreover, they satisfy 
\[ \wp(f_i; g) \geq 0, \quad i = 1,2\] 
for all strategies $g$ with equality if and only if $\mca(g) = \frac 1 2$. 
\end{prop} 

\begin{proof} 
For any $g$ competing with a positive constant function, denoted by $u(x)=U(1)$,
	
	\[ \wp(u; g) = \frac{1}{U(1) + G(1)} \int_0^1 U(1) \left( \int_0 ^x g(t) dt - \int_x ^1 g(t) dt \right) dx \] 
	
	\[ = \frac{U(1)}{U(1) + G(1)} \int_0 ^1 \left( G(x) - G(1) + G(x) \right) dx = \frac{U(1)}{U(1) + G(1)} \left( \int_0 ^1 2 G(x)dx  - G(1) \right) , \]
	 where $G(x) := \int_0 ^x g(t) dt$.  We recall that the constraint requires 
	
	 \[ \frac{1}{G(1)} \int_0 ^x x g(x) dx \leq \frac 1 2.\] 
	 Integration by parts shows that this is equivalent to 
	
	 \[ \frac{1}{G(1)} \left( G(1) - \int_0 ^1 G(x) dx \right) \leq \frac 1 2 \iff G(1) \leq 2 \int_0 ^1 G(x) dx. \] 
	 Consequently, we obtain that 
	 
	 \[\wp(u;g) \geq 0,\]
	 with equality precisely when the $\mca$ constraint is an equality, that is $\mca(g) = \frac 1 2$.  
	
	We therefore obtain that for $f_1$ and $f_2$ as in the theorem, 
	
	\[ \wp(f_1; f_2) = 0 = \wp(f_2; f_1).\] 
	Moreover, changing the strategy 
	
	\[ \wp(g; f_2) = - \wp(f_2; g) \leq 0 = \wp(f_1; f_2)\] 
	for any strategy $g$, and similarly, 
	
	\[ \wp(g; f_1) = - \wp(f_1; g) \leq 0 = \wp(f_2; f_1).\]
	The pair $(f_1, f_2)$ is therefore by definition an equilibrium strategy.  To characterize all equilibrium strategies, we note that if $(f,g)$ is an equilibrium strategy, then we must have 
	
	\[ \wp(f;g) = \wp(g;f) = 0.\] 
	To see why this must be true, assume for the sake of contradiction that $\wp(f;g) > 0$.  Then by the zero sum dynamic 
	
	\[ \wp(g; f) < 0 = \wp(f; f).\] 
	This contradicts the definition of equilibrium strategy.  Next, assume for the sake of contradiction that $\mca(g) < \frac 1 2$.  Then we have 
	
	\[ \wp(f;g) = 0 < \wp(u;g),\] 
	contradicting the definition of equilibrium strategy.  It follows that all strategies that comprise an equilibrium strategy must have $\mca = \frac 1 2$.  Moreover, since an equilibrium strategy $(f_1, f_2)$ must satisfy 
	
	\[ \wp(f_1; f_2) = \wp(f_2; f_1) = 0,\] 
	by definition of equilibrium strategy there can not exist a strategy $g$ such that either 
	
	\[ \wp(g; f_2) > 0 \textrm{ or } \wp(g; f_1) > 0.\] 
	To prove that all equilibrium strategies are comprised of positive constant functions, we will show that any strategy that for any strategy $f$ that is not a positive constant, we can construct a strategy $g$ such that 

	\begin{equation} \label{eq:g_beats_f}  \wp(g; f) > 0. \end{equation} 

	If $\mca(f)<1/2$ then $g(x)=G(1)>0$ satisfies \eqref{eq:g_beats_f}, so we may assume $\mca(f)=1/2$.  Then, we have that 

	\[ \wp(g;f) > 0 \iff \int_0 ^1 g(x) (2 F(x) - F(1)) dx > 0 \iff \int_0 ^1 g(x) \left( F(x) - \frac{F(1)}{2} \right) dx > 0\] 

	\[ \iff \int_0 ^1 g(x) \left( \frac{F(x)}{F(1)} - \frac 1 2 \right) dx > 0.\] 
	We will search for $g$ that has $\mca(g) = \frac 1 2$, consequently
	
	\[ \int_0 ^1 \frac{g(x)}{2} dx = \frac{G(1)}{2} = \int_0 ^1 x g(x) dx.\] 
	We therefore substitute obtaining 
	
	\[ \wp(g; f) > 0 \iff  \int_0 ^1 g(x) \left( \frac{F(x)}{F(1)} -x \right) dx > 0.\]

	Since  $\mca(f)=0.5,$
	\[
	\int_0^1xdx=\frac{1}{2}=\frac{1}{F(1)}\int_0^1f(x)x dx=1-\int_0^1 \frac{F(x)}{F(1)}dx=2\int_0^1xdx-\int_0^1 \frac{F(x)}{F(1)}dx,
	\]
	that is,
	\[
	0=\int_0^1 \left(\frac{F(x)}{F(1)} - x \right)dx.
	\]
	Since $f$ is not constant there exists $x\in[0,1]$ such that $F(x)\neq xF(1)$. Thus, the integrand above must assume both positive and negative values, and in particular, 
	\begin{equation}\label{eq:notconstant}
	\exists \quad  0 \leq a < b \leq 1  \textrm{ such that }  \frac{F(x)}{F(1)}-x>0\quad \forall x\in[a,b].
	\end{equation}
	If $(a,b)$ is not fully contained in either $(0,1/2)$ or $(1/2,1)$, then by continuity, it is possible to split $(a,b)$ into smaller intervals, one of which is fully contained in either $(0,1/2)$ or $(1/2,1)$.  We therefore assume without loss of generality that $(a,b)$ is contained in either $(0,1/2)$ or $(1/2,1)$. 
	First, assume $(a,b)\subset (1/2,1)$.
	Define
	\begin{equation}\label{eq:g2}
	g(x)=
	\begin{cases}
	4M^2x, & x\in[0,1/(2M)]\\
	-4M^2x + 4M & x\in[1/(2M), 1/M ]\\  
	2N\frac{x-a}{b-a}, & x\in[a,(a+b)/2] \\
	2N\frac{b-x}{b-a}, & x\in[(a+b)/2,b] \\
	0,& \mathrm{otherwise.}
	\end{cases}
	\end{equation}
	The constants $M,N$ will be determined to guarantee that $\mca(g)=0.5$, and $\wp(g; f) > 0$.
	We denote by $I_1$ and $I_2$ the integrals
	\begin{equation}
	I_1=\int_a^b\left(\frac{F(x)}{F(1)}-x\right)g(x)dx, \qquad I_2=\int_0^{1/M}\left(\frac{F(x)}{F(1)}-x\right)g(x)dx.
	\end{equation}
	Since 
	\[ \wp(g; f) > 0 \iff \int_0 ^1 g(x) \left( \frac{F(x)}{F(1)} - x \right) dx > 0,\]
	and 
	\[\int_0 ^1 g(x) \left( \frac{F(x)}{F(1)} - x \right) dx  =  I_1 + I_2 \geq I_1 - |I_2|.\]
	we wish to estimate $I_1$ from below and $|I_2|$ from above.  
	
	By continuity, there is a strictly positive constant $R$ such that 
	\[ \frac{F(x)}{F(1)} - x > R > 0 \quad \forall x\in[a,b]. \] 
	Therefore, we have the estimate
	\[
	I_1\geq R N\frac{b-a}{2}.
	\] 
	By definition of $F$, we have the estimate 
	
	\[
	|F(x)|=\left|\int_0^x f(t)dt\right| \leq \frac{||f||_{\infty}}{M},\quad \forall x\in[0,1/M].
	\]
	Above, $||f||_{\infty}$ is the standard supremum norm of $f$ which is finite because $f$ is continuous on the compact set $[0, 1]$.  
	
	By the triangle inequality 
	\begin{equation*}
	\left|\frac{F(x)}{F(1)}-x\right|\leq \frac{1}{M} \left(\frac{||f||_{\infty}}{F(1)}+1\right),\qquad \forall x\in[0,1/M].
	\end{equation*}
	Since
	\[
	0\leq g(x)\leq 2M, \qquad \forall x\in[0,1/M].
	\]
	Thus, 
	\[
	|I_2|\leq \int_0^{1/M} 2M \frac{1}{M} \left(\frac{||f||_{\infty}}{F(1)}+1\right)dx=\frac{2}{M} \left(\frac{||f||_{\infty}}{F(1)}+1\right).
	\]
	To conclude, 
	\begin{equation}\label{eq:estimateApp}
	I_1-|I_2|\geq R N\frac{b-a}{2}-\frac{2}{M} \left(\frac{||f||_{\infty}}{F(1)}+1\right).
	\end{equation}
	We will choose $M,N$ such that both $\wp(g; f) > 0$ and $\mca(g)=1/2$, so we begin by computing 
	\[G(1)=1+N\frac{b-a}{2}\implies \frac{1}{G(1)}=\frac{2}{2+N(b-a)}.\] 
	Integrating $xg(x)$ gives
	\begin{equation}\label{eq:mcaAppG}
	\mca(g)=\frac{2}{2+N(b-a)}\left(\frac{N}{b-a}\frac{(b-a)(b^2-a^2)}{4}+\frac{1}{2M}\right).
	\end{equation}
	Set $\mca(g)=1/2$, that is,
	\begin{equation*}
	\frac{1}{2}= \frac{2}{2+N(b-a)}\left(N\frac{b^2-a^2}{4}+\frac{1}{2 M}\right).
	\end{equation*}
	Therefore,
	\begin{equation}\label{eq:Napp}
	N=\frac{2}{(b-a)(a+b-1)}\left(1- \frac{1}{M} \right).
	\end{equation}
	Here it is important to observe that since $a < b$, and $[a,b] \subset [1/2, 1]$, $a+b-1 > 0$, so we are not dividing by zero, and $N>0$.  

	Inserting $N$ from~(\ref{eq:Napp}) into~(\ref{eq:estimateApp}),
	\begin{equation*}
	I_1-|I_2|\geq R \frac{1}{a+b-1}\left(1-\frac{1}{M} \right)-\frac{2}{M} \left(\frac{||f||_{\infty}}{F(1)}+1\right).
	\end{equation*}
	Consequently $I_1>|I_2|$ is achieved by choosing $M$ such that
	\begin{equation}\label{eq:Mapp}
	2 \left( \frac{ ||f||_\infty }{F(1)} + 1 \right) \frac{a+b-1}{R} + 1  < M.
	\end{equation}

	This completes the proof in the case $(a,b)\subset(1/2,1)$. 
	
	If $(a,b)\subset (0,1/2)$, define
\begin{equation}\label{eq:g}
g(x)=
\begin{cases}
2N\frac{x-a}{b-a}, & x\in[a,(a+b)/2] \\
2N\frac{b-x}{b-a}, & x\in[(a+b)/2,b] \\
2M^2(x-1+1/M), & x\in[1-1/M,1-1/(2M)]\\
2M^2(1-x), & x\in[1-1/(2M),1]\\
0,& \mathrm{otherwise,}
\end{cases}
\end{equation}
with constants $M,N$ to be determined.  First, we shall fix $N$ so that the function $g$ satisfies the MCA constraint.  We compute 
\[ 
G(1) = 1 + N \frac{b-a}{2}, \quad 
\mca(g) = \frac{2}{2+N(b-a)}\left( N\frac{b^2-a^2}{4} +  1 - \frac{1}{2M} \right).
\] 
We set this equal to $\frac 1 2$, and obtain the equation
\[ 
N\frac{b^2-a^2}{4}+  1 - \frac{1}{2M} = \frac{1}{4}\big(2+N(b-a)\big).
\] 
Re-arranging, 
\[ 
N\frac{b-a}{4}(b+a) - N\frac{b-a}{4}=\frac{1}{2M}-\frac{1}{2},
\]
hence we require that 
\begin{equation} 
\label{eq:nmsymmetry} 
N=\frac{2}{b-a}\frac{1-1/M}{1-a-b}.
\end{equation} 
In this case, it is important to note that $(a,b) \subset (0, 1/2)$, so the denominator is positive.  We shall choose 
\begin{equation} 
\label{eq:mcondition1app} 
M>1. 
\end{equation} 
%
We denote by $I_1$ and $I_2$ the integrals
\begin{equation*}
I_1=\int_a^b\left(\frac{F(x)}{F(1)}-x\right)g(x)dx, \qquad I_2=\int_{1-1/M}^1\left(\frac{F(x)}{F(1)}-x\right)g(x)dx.
\end{equation*}
By continuity, there is a strictly positive constant $R$ such that 
\[ \frac{F(x)}{F(1)} -x>R\quad \forall x\in[a,b]. \] 
Therefore, we have the estimate
\[
I_1\geq R N\frac{b-a}{2}.
\] 
Next, we wish to estimate $|I_2|$ from above.  For this we note that 
\[ \left| \frac{F(x)}{F(1)} - x \right| = \left| \frac{F(x) - F(1) + F(1) (1-x)}{F(1)} \right| \] 
\[ \leq \frac{|F(x) - F(1)|}{F(1)} + |1-x| \leq \frac{||f||_\infty}{M F(1)} + \frac 1 M, \quad \forall x \in [1-1/M, 1],\] 
since 
\[ |F(x) - F(1)| = \left| \int_x ^1 f(t) dt \right| \leq (1-x) ||f||_\infty.\]
Above, $||f||_\infty$ is the supremum norm of $f$, that is finite because $f$ is continuous on the compact set $[0, 1]$.  Consequently, since $0 \leq g(x) \leq 2M$ for all $x \in [1-1/M, 1]$, we estimate 
\[ |I_2| \leq \int_{1-1/M} ^1 (2M) \frac 1 M \left( \frac{||f||_\infty}{F(1)} + 1 \right) dx \leq \frac 2 M  \left( \frac{||f||_\infty}{F(1)} + 1 \right).\] 
We thus obtain the estimate 
\[ 
I_1 + I_2 \geq I_1 - |I_2| \geq R N\frac{b-a}{2} -  \frac 2 M  \left( \frac{||f||_\infty}{F(1)} + 1 \right).
\]
Eliminating $N$ using~(\ref{eq:nmsymmetry}), we obtain an estimate that can be controlled by the value of $M$, namely
\[ 
I_1 + I_2 \geq R \frac{1-1/M}{1-a-b} -  \frac 2 M  \left( \frac{||f||_\infty}{F(1)} + 1 \right).
\]
We therefore have 
\[ 
R \frac{1-1/M}{1-a-b} -  \frac 2 M  \left( \frac{||f||_\infty}{F(1)} + 1 \right)>0
\] 
\[\iff\]
\[
M>1+2\frac{1-a-b}{R}\left(\frac{\|f\|_\infty}{F(1)}+1\right).
\]
Notice that this condition on $M$ guarantees that~(\ref{eq:mcondition1app}) is satisfied.
Consequently, 
\[ I_1 + I_2 > 0 \implies \wp(g; f) > 0. \] 
\end{proof}

We can now complete the proof of Theorem 1 for the continuous model.
\subsubsection*{Proof of Theorem 1 for the continuous model} \label{si:continuous_t1} 
\begin{proof}
Assume that $\{ f_k \}_{k=1} ^n$ is a set of strategies that satisfies the hypotheses of the theorem.  We would like to prove that these comprise an equilibrium strategy.  Consider any strategy $g$ as in Definition 2.  Then if we change strategy $f_1$ to $g$, the resulting payoff $ \wp (g; f_2, \ldots, f_n) =$

\[ \frac{1}{G(1) + \sum_{k \geq 2} F_k (1)} \int_0 ^1 g(x) \left[ \int_0 ^x \left(g(t) + \sum_{k \geq 2} f_k (t) \right)dt - \int_x ^1 \left(g(t)+ \sum_{k \geq 2} f_k (t) \right)dt \right]dx.\] 
However, by the assumption on $f_k$ in the theorem and the zero-sum dynamic, we have 

\[\wp(f_k; g) \geq 0 \implies \wp(g; f_k) \leq 0 \implies \int_0 ^1 g(x) \left[ \int_0 ^x f_k (t)dt - \int_x ^1 f_k (t)dt \right]dx \leq 0, \] 

$\forall k=2, \ldots, n$. 
Therefore, since the internal competition within $g$ does not affect the payoff, we have 

\begin{equation} \label{eq:proof2} \int_0 ^1 g(x) \left[ \int_0 ^x \sum_{k=2} ^n f_k (t)dt - \int_x ^1 \sum_{k=2} ^n f_k (t)dt \right]dx \leq 0 \implies \wp (g; f_2, \ldots, f_n) \leq 0. \end{equation} 

By the assumptions of the theorem we have for all pairs $f_j$ and $f_k$ both 

\[ \wp(f_k; f_j) \geq 0, \quad \wp(f_j; f_k) \geq 0. \] 
Due to the zero sum dynamic

\[ \wp(f_k; f_j) = - \wp(f_j; f_k) \implies \wp(f_k; f_j) = 0 \quad \forall j, k \in \{1, \ldots, n\}.\] 
Consequently, we have

\[ \int_0 ^1 f_k (x) \left[ \int_0 ^x f_\ell(t)dt - \int_x ^1 f_\ell(t)dt \right]dx = 0, \quad \forall k, \ell \in \{1, \ldots, n\},\] 
and therefore 

\[ \wp(f_k; f_1, \ldots, f_{k-1}, f_{k+1}, f_n) = 0 \quad \forall k=1, \ldots, n.\] 
By \eqref{eq:proof2}, we therefore have $\wp(g; f_2, \ldots, f_n) \leq \wp(f_1; f_2, \ldots, f_n) = 0$.   The same argument applies to $f_k$ for all $k$, namely, for any $g$ as in, Definition 2, 

\[ \wp(g; f_1, \ldots, f_{k-1}, f_{k+1}, \ldots f_n) \leq 0 = \wp(f_k; f_1, \ldots, f_{k-1}, f_{k+1}, \ldots, f_n).\] 
It follows that the set of strategies $\{f_k\}_{k=1} ^n$ is an equilibrium strategy.  

To prove the converse, we begin by assuming that $\{f_k\}_{k=1} ^n$ is an equilibrium strategy.  For the sake of contradiction, let us see what would happen if 

\[ \wp(f_1; f_2 \ldots, f_n) < 0 \implies \int_0 ^1  f_1 (x) \left[ \int_0 ^x \sum_{k=2} ^n f_k (t)dt - \int_x ^1\sum_{k=2} ^n f_k (t)dt \right]dx < 0.\] 
Defining a new strategy $g$ so that 

\[ g(x) := \sum_{k=2} ^n f_k (x) \implies \mca(g) \leq \frac 1 2,\]
and 

\[ \int_0 ^1 g(x) \left[ \int_0 ^x \sum_{k=2} ^n f_k (t)dt - \int_x ^1 \sum_{k=2} ^n f_k (t)dt \right]dx = 0 \] 

\[ \implies \wp(g; f_2, \ldots, f_n) = 0 > \wp (f_1; \ldots, f_n),\] 
contradicting the definition of equilibrium strategy.  We therefore have that $\wp(f_1; f_2, \ldots, f_n) = 0$, and the same argument shows that $\wp (f_k; f_1, \ldots, f_{k-1}, f_{k+1}, \ldots f_n ) = 0$ for all $k=1, \ldots, n$.  By the definition of equilibrium strategy, we must have that for any strategy $g$, 

\[ \wp (g; f_2, \ldots, f_n) \leq \wp (f_1; f_2 \ldots, f_n) = 0 \implies \int_0 ^1 g(x) \left[ \int_0 ^x \sum_{k=2} ^n f_k (t)dt - \int_x ^1 \sum_{k=2} ^n f_k (t)dt \right]dx \leq 0.\] 
We define a new strategy 

 \[ g_2 (x) := \sum_{k=2} ^n f_k (x) \implies \mca(g_2) \leq \frac 1 2.\] 
Then the payoff to any strategy $g$ in competition with $g_2$ satisfies 

\[ \wp (g; g_2) \leq 0.\] 
By the zero sum dynamic, we therefore have 

\[ \wp (g_2; g) \geq 0.\] 
By the definition of equilibrium strategy, since $\wp (f_k; f_1, \ldots , f_{k-1}, f_{k+1}, \ldots, f_n) = 0$ for all $k$, we define strategies 

\[ g_k (x) := \sum_{\ell \neq k} f_\ell (x),\] 
that all satisfy 

\[ \wp (g_k; g) \geq 0\] 
for all strategies $g$ as in Definition 2.  We therefore have that 

\[ \wp(g_k; g) \geq 0, \] 
with equality if and only if $\mca(g) = \frac 1 2$.  Consequently, we obtain that for all $g$ with $\mca(g) = \frac 1 2$, 

\[ \wp(g; g_k) = 0 \implies \sum_{k=1} ^n \wp(g; g_k) = 0 \] 

\[ \implies (n-1) \int_0 ^1 g(x) \left[ \int_0 ^x \sum_{k=1} ^n f_k (t)dt - \int_x ^1 \sum_{k=1} ^n f_k (t)dt \right]dx = 0,\] 
as well as 

\[ \wp(g; g_1) = 0 \implies \int_0 ^1 g(x) \left[ \int_0 ^x \sum_{k=2} ^n f_k (t)dt - \int_x ^1 \sum_{k=2} ^n f_k (t)dt \right]dx = 0.\] 
We therefore obtain that for all $g$ with $\mca(g) = \frac 1 2$, 

\[ \int_0 ^1 g(x) \left[ \int_0 ^x f_1 (t)dt - \int_x ^1 f_1 (t)dt \right]dx = 0 \implies \wp(g; f_1) = 0.\] 
It follows that in fact for all strategies $g$ as in Definition 2 we have 

\[ \wp(f_1; g) \geq 0, \] 
with equality if and only if $\mca(g) = \frac 1 2$.  The same argument shows that each $f_k$ satisfies 

\[ \wp(f_k; g) \geq 0\] 
for all $g$ as in Definition 2.  
\end{proof} 

\subsubsection*{Reproducing competitive exclusion.}  \label{ss:extinction}
\begin{figure}[h]
\centering
\includegraphics[width=0.5\textwidth]{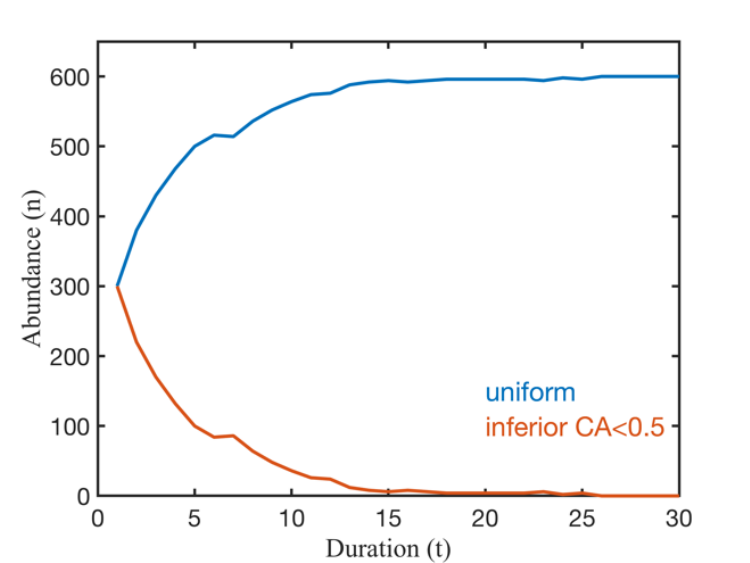}
\caption{Competition between uniform (blue) and an inferior competitor (orange) that has mean competitive ability $<0.5$. In this case, our model faithfully reproduces the ‘competitive exclusion’ principle \cite{hardin1960}, and the species with the lower mean is eliminated.} 
\label{fig:s2} 
\end{figure} 

\begin{figure}[h]
\centering
\includegraphics[width=\textwidth]{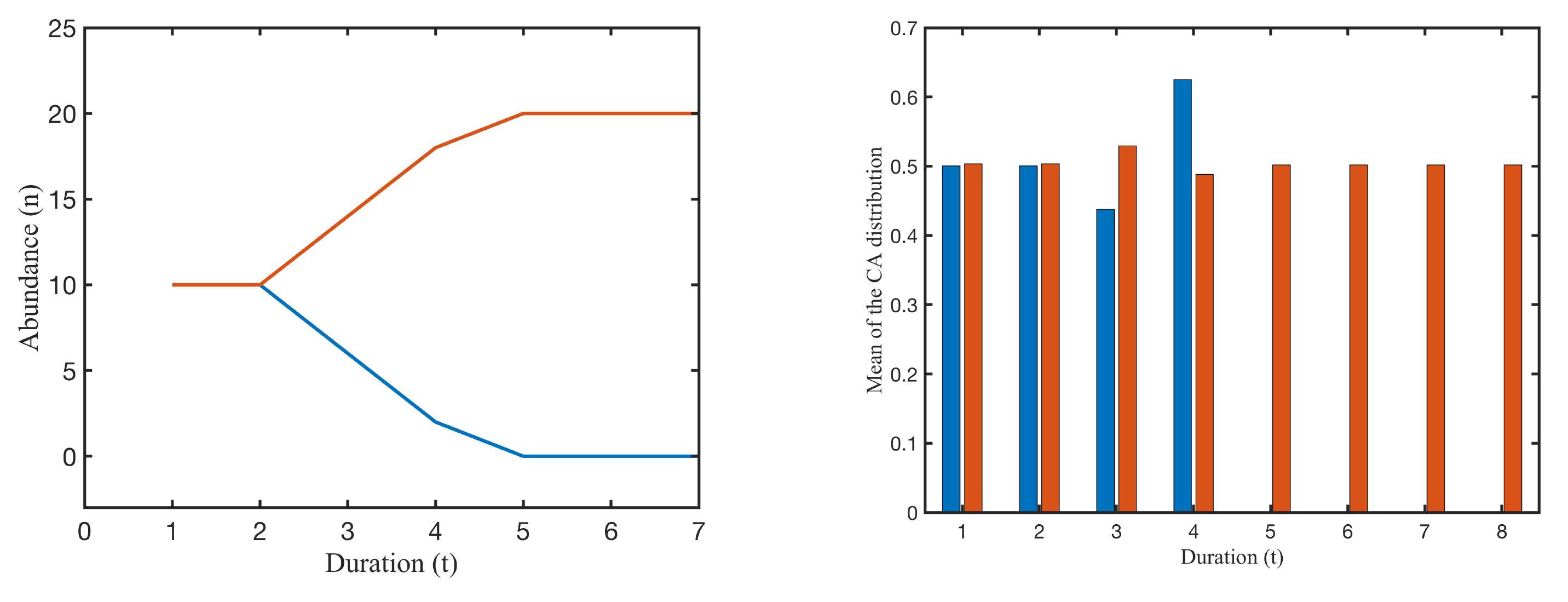}  
\caption{On the left, the nominally uniform distribution goes extinct when the uniform distribution is statistically poorly characterized because few individuals represent the distribution. The right shows the mean competitive ability of the two cohorts in this simulation. The inferior mean in time step 3 results in extinction of the blue type. Even a reversal of fortunes in the next time step, with a higher mean CA for blue than orange does not balance the competition, and the blue population is eliminated. The blue-uniform strategy could have persisted longer by chance alone, but the randomly selected orange competitors all had superior competitive ability.} 
\label{fig:s3}
\end{figure}

The trivial example of competition with an inferior competitor, characterized by a lower mean competitive ability, leads to extinction as shown in Fig \ref{fig:s2}.  At low population sizes the underlying strategy distributions can be poorly represented, especially for the maximally variable uniform distribution as shown in Fig \ref{fig:s3}. In these cases, extinctions can be observed. Such dynamics would apply when few individuals colonize a new habitat, mutations yield new functions or other events when a few individuals represent the entirety of the population.

\clearpage

%
%
%

\end{document}